\documentclass[english]{tlp}
\usepackage{import}

\usepackage{ifthen}
\usepackage{url}
\usepackage{amssymb}
\usepackage{amsmath}
\usepackage{import}
\usepackage{xparse}
\usepackage{amsmath}
\usepackage[usenames,dvipsnames]{xcolor}
\usepackage{xspace}
\usepackage{datatool}
\usepackage{glossaries}

\newcommand{\m}[1]{\ensuremath{#1}\xspace}
\newcommand{\trval}[1]{\m{{\bf #1}}}


	\newcommand{\limplies}{\m{\Rightarrow}}
	
	\newcommand{\lequiv}{\m{\Leftrightarrow}}

	\newcommand{\lrule}{\m{\leftarrow}}
	\newcommand{\cause}{\m{\stackrel{c}{\lrule}}}
	\newcommand{\rul}{\m{\leftarrow}}
	\newcommand{\ltrue}{\trval{t}}
	\newcommand{\lfalse}{\trval{f}}
	\newcommand{\lunkn}{\trval{u}}
	


	\newcommand{\ra}{\m{\rightarrow}}
	
	\newcommand{\mim}{\limplies}
	\newcommand{\equi}{\lequiv}
	\newcommand{\Tr}{\ltrue}
	\newcommand{\Fa}{\lfalse}
	\newcommand{\Un}{\lunkn}

	\newcommand{\voc}{\m{\Sigma}}

	\newcommand{\struct}{\m{I}}
	
	\newcommand{\I}{\m{\mathcal{I}}}


	\newcommand{\D}{\m{\Delta}}

	\NewDocumentCommand\inter{g+g}{%
	  \IfNoValueTF{#1}
	    {\struct}
	    {\m{#1^{#2}}}}

	\newcommand{\defined}[1]{\ensuremath{\mbox{\it Def}({#1})}\xspace}
	
	\newcommand{\pars}[1]{\ensuremath{\mbox{\it Par}({#1})}\xspace}


	\newcommand{\xxx}{\m{\overline{x}}}

	\newcommand{\ddd}{\m{\overline{d}}}

	\newcommand{\ttt}{\m{\overline{t}}}




	\newcommand{\bool}{\m{\mathbb{B}}}

	\renewcommand{\int}{\m{\mathbb{Z}}}

	\newcommand{\leqp}{{\m{\,\leq_p\,}}}

	\DeclareMathOperator\glb{glb}



	\NewDocumentCommand\subs{g+g}{%
	  \IfNoValueTF{#1}
	    {\m{/}}
	    {\m{#1/ #2}}}


	\newcommand{\logicname}[1]{\text{\sc #1}\xspace}



	\newcommand{\fodot}{\logicname{FO(\ensuremath{\cdot})}}
	
	\newcommand{\foid}{\logicname{FO(\ensuremath{ID})}}
	

	\newcommand{\foids}{\logicname{FO(\ensuremath{ID^{*}})}}
	\newcommand{\soids}{\logicname{SO(\ensuremath{ID^{*}})}}
	\newcommand{\esoids}{\logicname{ESO(\ensuremath{ID^{*}})}}
	\newcommand{\asoids}{\logicname{ASO(\ensuremath{ID^{*})}}}

\newcommand{\ouracronym}[3]{%
	\newacronym{#1}{#2}{#3}
	\expandafter\newcommand\csname #1\endcsname{\gls{#1}\xspace}%
}
	\ouracronym{FO}{FO}{first-order logic}
	\ouracronym{MX}{MX}{Model Expansion}
	\ouracronym{MO}{MO}{Model Optimization}
	\ouracronym{ASP}{ASP}{Answer Set Programming}
	\ouracronym{TP}{TP}{Theorem Proving}
	\ouracronym{LP}{LP}{Logic Programming}
	\ouracronym{CP}{CP}{Constraint Programming}
	\ouracronym{FP}{FP}{Functional Programming}
	\ouracronym{KR}{KR}{Knowledge Representation}
	\ouracronym{CSP}{CSP}{Constraint Satisfaction Problem}
	\ouracronym{SMT}{SMT}{SAT Modulo Theories}
	\ouracronym{KBS}{KBS}{Knowledge Base System}
	\ouracronym{NNF}{NNF}{Negation Normal Form}
	\ouracronym{FNNF}{FNNF}{Flat Negation Normal Form}
	\ouracronym{DefNNF}{DefNNF}{Definition Negation Normal Form}
	\ouracronym{CDCL}{CDCL}{Conflict-Driven Clause-Learning}
	\ouracronym{LCG}{LCG}{Lazy Clause Generation}



%

	\makeatletter
	\def\ifenv#1{
	\def\@tempa{#1}%
	\def\@ttempa{#1*}%
	\ifx\@tempa\@currenvir
	\expandafter\@firstoftwo
	\else
	\expandafter\@secondoftwo
	\fi
	}
	\makeatother

	\newcommand{\ddrule}[4]{\ensuremath{#1 \leftarrow #2 & \{#3\} & #4}}
	\newcommand{\drule}[2]{\ensuremath{#1 & \leftarrow & #2}}

	\newcommand{\darule}[4]{\ensuremath{#1 \leftarrow #2 & \{#3\} & #4}}
	\newcommand{\arule}[2]{\ensuremath{#1 \, &\leftarrow \, #2}}

	\newcommand{\LNDRule}[2]{
	\ifenv{array}
	{\drule{#1}{#2}}
	{ \ifenv{align}
		{\arule{#1}{#2}}
		{\ifenv{align*}
		{\arule{#1}{#2}}
		{ERROR: using LDRule in unsupported environment: \@currenvir}
		}
	}
	}

	\newcommand{\LDRule}[4]{
	\ifenv{array}
	{\ddrule{#1}{#2}{#3}{#4}}
	{ \ifenv{align}
		{\darule{#1}{#2}{#3}{#4}}
		{\ifenv{align*}
		{\darule{#1}{#2}{#3}{#4}}
		{ERROR: using LDRule in unsupported environment: \@currenvir}
		}
	}
	}

	\NewDocumentCommand\LRule{m+g+g+g}{%
		\IfNoValueTF{#2}%
		{#1.&}{%
		\IfNoValueTF{#3}
		{\LNDRule{#1}{#2.}}
		{\LDRule{#1}{#2.}{#3}{#4}}%
		}
	}


	\NewDocumentCommand\CLRule{m+g}{%
	\ifenv{array}
	{\cdrule{#1}{#2}}
	{ \ifenv{align}
		{\carule{#1}{#2}}
		{\ifenv{align*}
			{\carule{#1}{#2}}
			{ERROR: using CLRule in unsupported environment: \@currenvir}
		}
	}
	}

	\NewDocumentCommand\carule{m+g}{%
		\IfNoValueTF{#2}
			{\ensuremath{#1.}}
			{\ensuremath{#1 \, &\cause \, #2}}}
	\NewDocumentCommand\cdrule{m+g}{%
		\IfNoValueTF{#2}
			{\ensuremath{#1.}}
			{\ensuremath{#1 & \cause & #2}}}



	\newcommand{\algrule}[4]{
	\hbox{{#1}:}& 
	\quad #2 ~\longrightarrow~ #3 
	\hbox{~ if } #4\\
	}

	\newcommand{\AlgoRule}[4]{
	\ifenv{array}
	{\algrule{#1}{#2}{#3}{#4}}
		{ERROR: using AlgoRule in unsupported environment: \@currenvir}
	}

\newcommand{\commentstyle}{\color{Gray}}

	\usepackage[final]{listings}

	\lstdefinelanguage{idp}{
		morekeywords=[1]{namespace,vocabulary,theory,structure,procedure,term,set,formula, spec, specification, template},
		morekeywords=[2]{include,using,type,isa,contains,partial,extern,LFD,GFD,constructed,from,constraint,func,pred,supertype,of,subtype,define},
		morekeywords=[3]{int,float,char,string,nat},
		morekeywords=[4]{if,then,else,for,end},
		morecomment=[s]{/*}{*/},	
		morecomment=[l]{//}
	}
	\lstset{
		language=idp,
		tabsize=3,
		basicstyle=\small,
		frame=none,
		frame=single,
		showstringspaces=false,
		breaklines = true,
		commentstyle=\commentstyle,
		literate={~} {$\sim$}{1}
	}


	\newcommand{\ignore}[1]{}

	\newboolean{nocomments}
	\setboolean{nocomments}{false}

	\newboolean{commentmargin}
	\setboolean{commentmargin}{true}

	\newcommand{\namedcomment}[3]{
		\ifthenelse{\boolean{nocomments}}
		{} 
		{ 
			\ifthenelse{\boolean{commentmargin}}
				{ {\color{#3} \marginpar{\color{#3}\sc #2}#1}  } 
				{  {\color{#3} {\sc #2}: #1}  } 
		}
	}
	\newcommand{\mnamedcomment}[3]{\ifthenelse{\boolean{nocomments}}{}{{\marginpar{ \color{#3}{\sc #2}:#1}}}}



	\usepackage{soul}




\usepackage{etoolbox}

\newcommand\setcitation[2]{%
  \csdef{mycommoncitation#1}{#2}}

\setcitation{IDP}{WarrenBook/DeCatBBD14}
\setcitation{fodot}{tocl/DeneckerT08}
\setcitation{CP}{Apt03}
\setcitation{EZCSP}{lpnmr/Balduccini11}
\setcitation{KR}{Baral:2003}
\setcitation{ASPComp2}{lpnmr/DeneckerVBGT09}
\setcitation{ASPComp3}{journals/tplp/CalimeriIR14}
\setcitation{ASPComp4}{conf/lpnmr/AlvianoCCDDIKKOPPRRSSSWX13}
\setcitation{CPSupport}{ictai/DeCat13}
\setcitation{functionDetection}{iclp/DeCatB13}
\setcitation{fodot2asp}{DeneckerLTV12} 
\setcitation{Tarskian}{DeneckerLTV12} 
\setcitation{Inca}{iclp/DrescherW12}
\setcitation{csp2asp}{ijcai/DrescherW11}
\setcitation{DPLLT}{cav/GanzingerHNOT04}
\setcitation{AspInPractice}{synthesis/2012Gebser}
\setcitation{clasp}{ai/GebserKS12}
\setcitation{oclingo}{kr/GebserGKOSS12}
\setcitation{gringo}{lpnmr/GebserST07}
\setcitation{cmodels}{aaai/GiunchigliaLM04}
\setcitation{inputster}{tplp/Jansen13}
\setcitation{DLV}{tocl/LeonePFEGPS06}
\setcitation{LearningPaper}{TPLP/BruynoogheBBDDJLRDV} 
\setcitation{clog}{iclp/BogaertsVDV14} 
\setcitation{foc}{iclp/BogaertsVDV14}
\setcitation{inferenceClog}{ecai/BogaertsVDV14}
\setcitation{examplesClog}{nmr/BogaertsVDV14b} 
\setcitation{AFT}{DeneckerBV12}
\setcitation{KBS}{iclp/DeneckerV08}
\setcitation{KBPE}{inap/DePooterWD11}
\setcitation{lazyGrounding}{corr/CatDSB14} 
\setcitation{ASP}{marek99stable}
\setcitation{satid}{sat/MarienWDB08}
\setcitation{lazyclausegeneration}{constraints/OhrimenkoSC09}
\setcitation{FP}{ACMCS/Hudak89}
\setcitation{GroundingWithBounds}{jair/WittocxMD10}
\setcitation{SAT}{faia/SilvaLM09}
\setcitation{LTC}{iclp/Bogaerts14}
\setcitation{SPSAT}{ictai/DevriendtBMDD12}
\setcitation{BreakID}{cspsat/DevriendtBB14}
\setcitation{LCG}{stuckeyLCG}
\setcitation{MiniZinc}{conf/cp/NethercoteSBBDT07}
\setcitation{amadini}{cpaior/AmadiniGM13}
\setcitation{bootstrapping}{lash/BogaertsJDJBD14}
\setcitation{GroundedFixpoints}{aaai/BogaertsVD15}
\setcitation{LogicBlox}{datalog/GreenAK12}
\setcitation{proB}{journals/sttt/LeuschelB08}
\setcitation{NaturalInductions}{KR/DeneckerV14} 
\setcitation{LP}{jacm/EmdenK76}
\setcitation{SMT}{faia/BarrettSST09}
\setcitation{AF}{ai/Dung95}
\setcitation{ADF}{kr/BrewkaW10}
\setcitation{ADFRevisited}{ijcai/BrewkaSEWW13}
\setcitation{DefaultLogic}{ai/Reiter80}
\setcitation{DL}{ai/Reiter80}
\setcitation{AEL}{mo85}
\setcitation{}{}
\setcitation{completion}{adbt/Clark78}
\setcitation{}{}
\setcitation{}{}
\setcitation{}{}
\setcitation{}{}
\setcitation{}{}
\setcitation{}{}
\setcitation{}{}
\setcitation{}{}
\setcitation{}{}
\setcitation{}{}
\setcitation{}{}
\setcitation{}{}
\setcitation{}{}
\setcitation{}{}
\setcitation{}{}
\setcitation{}{}
\setcitation{}{}
\setcitation{}{}
\setcitation{}{}
\setcitation{}{}
\setcitation{}{}
\setcitation{}{}
\setcitation{}{}
\setcitation{}{}
\setcitation{}{}
\setcitation{}{}
\setcitation{}{}
\setcitation{}{}
\setcitation{}{}
\setcitation{}{}
\setcitation{}{}
\setcitation{}{}
\setcitation{}{}
\setcitation{}{}
\setcitation{}{}
\setcitation{}{}
\setcitation{}{}
\setcitation{}{}
\setcitation{}{}
\setcitation{}{}
\setcitation{}{}
\setcitation{}{}
\setcitation{}{}
\setcitation{}{}
\setcitation{}{}
\setcitation{}{}
\setcitation{}{}
\setcitation{}{}
  


\usepackage{amsthm}

\theoremstyle{plain}
\newtheorem{thm}{Theorem}[section]

\newtheorem*{lem*}{Lemma}

\newtheorem{proposition}[thm]{Proposition}

\theoremstyle{definition}
\newtheorem{defn}[thm]{Definition}
\newtheorem{definition}[thm]{Definition}

\newtheorem*{nota*}{Notation}

\newtheoremstyle{example_basic} 
{\topsep} 
{\topsep} 
{\normalfont}
{0pt}
{\bfseries}
{.}
{5pt plus 1pt minus 1pt}
{}

\newtheoremstyle{example_contd}
{\topsep} 
{\topsep} 
{\normalfont}
{0pt}
{\bfseries}
{.}
{5pt plus 1pt minus 1pt}
{\thmname{#1} \thmnumber{ #2}\thmnote{#3} (continued)}

\theoremstyle{example_basic} 

\newtheorem{example}[thm]{Example}

\newtheorem{ex*}{Example}
\theoremstyle{example_contd}

\theoremstyle{plain}

\usepackage[english]{babel}
\usepackage{caption}
\usepackage{subcaption}
\usepackage{bold-extra}
\usepackage{enumitem}
\captionsetup{compatibility=false}
\DeclareCaptionFont{white}{\color{white}}
\DeclareCaptionFormat{listing}{\colorbox[cmyk]{0.43, 0.35, 0.35,0.01}{\parbox{\textwidth}{\hspace{15pt}#1#2#3}}}
\captionsetup[lstlisting]{format=listing,labelfont=white,textfont=white, singlelinecheck=false, margin=0pt, font={bf,footnotesize}}

\author[I. Dasseville, M. van der Hallen, G. Janssens, M. Denecker]{Ingmar Dasseville, Matthias van der Hallen, Gerda Janssens, Marc Denecker\\ 
KU Leuven\\
\email{firstname.lastname@cs.kuleuven.be}
}

\usepackage{listings}
\lstset{basicstyle={\small\ttfamily},
numberstyle={\tiny},
numbersep=5pt,
tabsize=2,
extendedchars=true,
breaklines=true,
frame=b,
stringstyle={\color{white}\ttfamily},
showspaces=false,
showtabs=false,
xleftmargin=17pt,
framexleftmargin=17pt,
framexrightmargin=5pt,
framexbottommargin=4pt,
showstringspaces=false,
mathescape=true}

\newcommand{\defp}[1]{\ensuremath{\mbox{\it Def}({#1})}\xspace}

\newcommand{\hilog}{HiLog}

\begin{document}
\setboolean{nocomments}{false}

\newcommand{\powSet}[1]{\m{\{s(\overline{s}) \leftarrow #1\}}}
\newcommand{\sRules}[1]{\m{\{(\overline{s})\leftarrow #1\}}}
\newcommand{\Log}{\m{\mathcal L}}
\newcommand{\figref}[1]{Figure~\ref{#1}}

\title{Semantics of templates in a compositional framework for building logics} 
\maketitle
\begin{abstract}
  There is a growing need for abstractions in logic specification
  languages such as \fodot and ASP.  One technique to achieve these
  abstractions are templates (sometimes called macros).  While the
  semantics of templates are virtually always described through a
  syntactical rewriting scheme, we present an alternative view on
  templates as second order definitions. To extend the existing
  definition construct of \fodot to second order, we introduce a
  powerful compositional framework for defining logics by modular
  integration of logic constructs specified as pairs of one
  syntactical and one semantical inductive rule.  We use the framework
  to build a logic of nested second order definitions suitable to
  express templates. We show that under suitable restrictions, the
  view of templates as macros is semantically correct and that adding
  them does not extend the descriptive complexity of the base logic,
  which is in line with results of existing approaches.
\end{abstract}
\begin{keywords}
compositionality, modularity, templates, macros, semantics, second order logic
\end{keywords}

\section{Introduction}
Declarative specification languages have proven to be useful in a
variety of applications, however sometimes parts of specifications
contain duplicate information.  This commonly occurs when different
instantiations are needed of an abstract concept.  For example, in an
application, we may have to assert of multiple relations that they are
an equivalence relation, or multiple relations of which we need to
define their transitive closure. In most current logics, the
constraints (e.g., reflexivity, symmetry and transitivity) need to be
reasserted for each relation. 

In the early days of programming,
imperative programming languages suffered from a similar situation
where code duplication was identified as a problem. The first solution
proposed to this was the use of macros, where a syntactical
replacement was made for every instantiation of the macro. 
For specification languages, the analog for macros was introduced (e.g. in ASP), most often called \emph{templates}. These allow us to define a concept and instantiating it multiple times, without making the language more computationally complex. Asserting that the two relations \lstinline{P} and \lstinline{Q} are equivalence relations could be done using a template \texttt{isEqRelation} as follows:
\begin{lstlisting}[caption={This example defines an equivalence relation},label={lis:EquivRel},basicstyle={\small\ttfamily},mathescape=true]
{isEqRelation(F) $\leftarrow$ 
	$\forall$a : F(a,a).
	$\forall$a,b : F(a,b) $\Leftrightarrow$ F(b,a).
	$\forall$a,b,c : (F(a,b) $\land$ F(b,c)) $\Rightarrow$ F(a,c).
}
isEqRelation(P) $\land$ isEqRelation(Q).
\end{lstlisting}
In existing treatments of templates, their semantics is given in a
transformational way, essentially by translating them away as if they
were macros.  This appraoch has its limitations. An intellectually
more gratifying view, certainly in a declarative setting, is that
templates are higher order definitions. This allows for a much more
general treatment.  In some interesting cases, these higher order
template definitions are recursive (see Example~\ref{lis:game}). In others,
like the template symbol $tc(P,Q)$ specifying $P$ as the transitive
closure of $Q$, the definiens is itself an inductive definition (see
Example~\ref{lis:TransClos}) nested in the template definition of $tc$.

The goal of this work is to introduce a declarative template mechanism
for the language \fodot. This logic posesses an expressive first order
definition construct in the form of rules under well-founded semantics
which was shown suitable to express informal definitions of the most
common types \cite{KR/DeneckerV14}. We want to extend \fodot's
definition construc of to nested higher order definitions.

In the first part of this paper, we present a compositional framework for building an infinite class of logics. This framework specifies a principled way for  building rule formalisms under well-founded and stable semantics from arbitrary logics, and ways to  compose and nest arbitrary language constructs including higher order symbols, rule sets and aggregates.  In the second part we use this framework to build a template formalism.  As a last contribution we show that under suitable conditions,  the standard approach of templates as rewriting macros also works in this formalism, thus recovering the results of existing approaches.


\section{Related Work}
Abstraction techniques have been an important area of research since
the dawn of programming \cite{DBLP:journals/software/Shaw84}. 
Popular programming languages such as
C++ consider templates as a keystone for abstractions
\cite{musser2009stl}.
Within the ASP community, work by Ianni et al. \cite{nmr/IanniIPSC04}
and Baral et al. \cite{DBLP:conf/iclp/BaralDT06} introduced concepts to support
composability, called templates and macros respectively. The key idea
is to abstract away common constructs through the definition of generic `template'
predicates. These templates can then be resolved using a rewriting
algorithm. 

More formal attempts at introducing more abstractions in ASP were made.
Dao-Tran et al. introduced modules which can be used in
similar ways as templates \cite{DBLP:conf/iclp/Dao-TranEFK09} but has the disadvantage
that his template system introduces additional computational complexity,
so the user has to be very careful when trying to write an efficient specification.

Previously, meta-programming \cite{abramson1989meta} has also been used to introduce abstractions, for example in systems such as \hilog\ \cite{chen1993hilog}.
One of \hilog s most notable features is that it combines a higher-order syntax with a first-order semantics.
\hilog s main motivation for this is to introduce a useful degree of second order yet remain decidable.
While decidability is undeniably an interesting property, the problem of decidability already arises in logic programs under well-founded or stable semantics, certainly with the inclusion of inductive definitions: the issue of undecidability is not inherent to the addition of template behavior.
As a result, in recent times deduction inference has been replaced by various other, more practical inference methods such as model checking, model expansion, or querying.
Furthermore, for practical applications, we impose the restriction of stratified templates for which an equivalent first-order semantics exists.

An alternative approach is to see a template instance as a
call to another theory, using another solver as an
oracle. An implementation of this approach exists
in HEX \cite{DBLP:conf/inap/EiterKR11}. This implementation however suffers
from the fact that the different calls occur in different processes.
As a consequence, not enough information is shared which hurts the
search. This is analog to the approach presented in \cite{DBLP:conf/frocos/TasharrofiT11},
where a general approach to modules is presented. A template would
be an instance of a module in this framework, however the associated algebra
lacks the possibility to quantify over modules.

Previous efforts where made to generalize common language concepts, such as the work by Lifschitz \cite{iclp/Lifschitz99} who extended logic programs to allow arbitrary nesting of conjunction $\land$, disjunction $\lor$ and negation as failure in rule bodies. The nesting in  this paper is of very different kind,  by allowing the full logic, including definitions itself, in the body.

\section{ Preliminaries}

\newcommand{\Three}{{{\mathcal T}hree}}
\newcommand{\Two}{{{\mathcal T}wo}}

\newcommand{\natnrs}{{\mathbb N}}

\newcommand{\ty}{\tau}
\newcommand{\dt}{\delta}

\newcommand{\ol}[1]{{\widetilde{#1}}}

\newcommand{\OO}{{\mathcal O}}
\newcommand{\II}{{\mathcal I}}
\renewcommand{\I}{{I}}
\newcommand{\restr}[2]{{#1|_{#2}}}
\newcommand{\IInt}{{\mathcal I}nt}
\newcommand{\Int}{{I}nt}

\newcommand{\XXX}{{\mathcal X}}
\newcommand{\SSS}{{\mathcal S}}

\newcommand{\ass}{{\upsilon}}
\newcommand{\assw}[2]{{#1}^{#2}}
\newcommand{\assv}[2]{{#1}^{\ass:#2}}
\newcommand{\spass}{{\bf \mathit s}}
\newcommand{\spv}[2]{{#1}^{\spass:#2}}
\newcommand{\kass}{{\bf \mathit k}}
\newcommand{\kv}[2]{{#1}^{\kass:#2}}

\newcommand{\defin}[1]{{\left\{\begin{array}{l}#1
    \end{array}
    \right\}}}

\newcommand{\compop}{{\sim}}



\paragraph{Symbols.} We assume an infinite supply of (typed)
symbols. A vocabulary $\voc$ is a set of (typed) symbols. For each
symbol $\sigma$, $\ty(\sigma)$ is its type.  For a tuple
$\bar{\sigma}$, $\ty(\bar{\sigma})$ denotes the tuple of types. 

An untyped logic is one with a single type. But for the purposes of
this paper, it is natural to use at least a simple form of typing,
namely to distinguish between first order symbols and the second order
(template) symbols. We distinguish between base types (some of which may
be interpreted, e.g., $\bool, \int$) and composite types.  A simple
type system that suffices for this paper consists of the following
types:
\begin{itemize}
\item base types $\dt$ and $\bool$; $\dt$ represents the domain; 
\item first order types: $n$-ary predicate types $\dt^n\ra\bool$ and function types  $\dt^n\ra\dt$. 
As usual, propositional symbols and constants are predicate and function symbols of arity $n=0$.
\item second order types: $n$-ary predicate types $(\ty_1,\dots,\ty_n)\ra\bool$ with each $\ty_i$ a first order type or $\dt$. 
\end{itemize}
This is the type system that we have in mind in this paper. It
suffices to handle untyped first order logic and second order
predicates (no second order functions are needed). However, the
framework below is well-defined for much richer type systems
(including higher order types, type theory).

\ignore{For reasons of simplicity, function symbols are not considered
  in this paper. However, the framework can be easily extended with
  them. Moreover, for logics such as FO, n-ary function symbols can be
  emulated using their n+1-ary graph predicate symbols.}

\paragraph{(Partial) values.}

For interpreted base types, there is a fixed domain of values. E.g., the domain of the boolean type $\bool$ is $\Two=\{\Tr,\Fa\}$. For other base types $\ty$, the domain  of values is chosen freely. For composite types, the set of values is constructed from the values of the base types. 

For the simple type system above, the values of all types are determined by the choice of the domain associated with $\dt$. For any domain $D$, we can define the domain  $\ty^D$ of any type $\ty$ as follows:
\begin{itemize}
\item  $\dt^D=D$, $\bool^D=\Two$
\item first order predicates: $(\dt^n\ra\bool)^D$ is the set of all functions from $D^n$ to $\Two$ (or equivalently,  the set of all subsets of $D^n$).  For first order functions,  $(\dt^n\ra\dt)^D$ is the set of all functions from $D^n$ to $D$.
\item second order predicates:  $((\ty_1,\dots,\ty_n)\ra\bool)^D$  is the set of all functions from  ${\ty_1}^D\times\dots\times{\ty_n}^D$ to $\Two$.
\end{itemize}

To define the semantics of inductive definitions, partial values for
predicates are essential (since only predicates are defined in the
logics of this paper, we do not introduce partial values for
functions).  A {\em partial set} on domain $D$ is a function from $D$
to $\Three=\{\Tr,\Un,\Fa\}$. A {\em partial} value of a predicate type
$\ty'=(\bar{\ty}\ra\bool)$ in domain $D$ is a {\em partial set} with
domain $\bar{\ty}^D$. $\Three$ extends $\Two$ and is equipped with two
partial orders: the truth order $\leq$ is the least partial order
satisfying $\Fa\leq \Un\leq\Tr$, the precision order $\leqp$ the least
partial order satisfying $\Un\leqp\Fa, \Un\leqp\Tr$. The orders $\leq$
and $\leqp$ on $\Three$ are pointwise extended to partial sets.  $\Un$
is seen as an approximation of truth values, not as a truth value in
its own right. A partial set that is maximally precise has range
$\Two$ and is called {\em exact}.  A partial set $\mathcal S$ is seen as an approximation of any exact set $S$ for which $\mathcal{S} \leqp S$.

\paragraph{(Partial) Interpretations.}

A partial $\voc$-interpretation $\II$ consists of a suitable domain
$\ty^\II$ for every type $\ty$ in $\voc$ (which is the set of partial
sets on ${\ty_d}^\II$ in case $\ty$ is a predicate type with domain
type $\ty_d$), and for every symbol $\sigma\in\voc$ of type $\ty$ a
value $\sigma^\II\in\ty^\II$. An exact $\voc$-interpretation is one
that assigns exact values. The class of partial
$\voc$-interpretations is denoted $\IInt(\voc)$; the class of exact
$\voc$-interpretations is $\Int(\voc)$.

The precision order $\leqp$ and truth order $\leq$ are extended to
partial interpretations in the standard way: $\II\leqp\II'$ if $\II,
\II'$ interpret the same vocabulary $\voc$, have the same values for
all types and non-predicate symbols, and $P^\II\leqp P^{\II'}$ for
every predicate symbol $P\in\voc$.  Likewise for the truth order
$\leq$.  We use $\II$ to denote a partial interpretation (which may be
exact) and $\I$ to denote an exact interpretation. 

The restriction of a $\voc$-interpretation $\II$ to
$\voc'\subseteq\voc$ is denoted as $\restr{\II}{\voc'}$. If $\II$ is a
partial $\voc$-interpretation, $\sigma$ a symbol (that might not
belong to $\voc$) and $v$ a well-typed value for $\sigma$,
then $\II[\sigma:v]$ is the $(\voc\cup\{\sigma\})$-interpretation
identical to $\II$ except that $v$ is the value of $\sigma$.

Given an interpretation $\II$ of at least the types of $\voc$, a {\em
  domain atom} of an $n$-ary predicate symbol $P\in\Sigma$ of type
$\bar{\ty}\ra\bool$ in $\II$ is a pair $(P,\ddd)$ where $\ddd\in
\bar{\ty}^\II$. It is denoted as $P(\ddd)$.  If $\II$ interprets $P$,
a domain atom $P(\ddd)$ has a truth value $P(\ddd)^\II = P^\II(\ddd)$.

For any $v\in\Three$ and set $X$ of domain atoms of partial
interpretation $\II$, we denote $\II[X:\Tr]$ the interpretation
identical to $\II$ except that each $A\in X$ is true; similarly for
$\II[X:\Un], \II[X:\Fa]$.  We may concatenate such notions and write
$\II[X:\Un][Y:\Fa]$, with the obvious meaning (first revising $X$,
next revising $Y$).

\paragraph{Logics $\Log^\ass$.}

A logic is specified as a pair $(\Log,\ass)$ (denoted $\Log^\ass$)
such that $\Log$ is a function mapping vocabularies $\voc$ to sets
$\Log(\voc)$ of expressions over $\voc$, and $\ass$ is a two-valued or
three-valued truth
assignment. 
An expression $\varphi$ of $\Log(\voc)$ has free symbols in $\voc$; it
may contain other symbols provided they are bound by some scoping
construct in a subexpression of $\varphi$ (e.g., a quantifier). If
$\voc\subseteq\voc'$, then $\Log(\voc)\subseteq\Log(\voc')$.

A (three-valued) truth assignment $\ass$ maps tuples $(\varphi,\II)$
where $\II$ interprets all free symbols of $\varphi$, to
$\Three$. This function satisfies the following properties: (1) if
$\varphi\in\Log(\voc)$, $\voc\subseteq \voc'$ and $\II$ is a
$\voc'$-interpretation, then
$\assv{\varphi}{\II}=\assv{\varphi}{\restr{\II}{\voc}}$; (2)
exactness: $\assv{\varphi}{\I}\in\Two$ for every exact interpretation
$\I$; (3) $\leqp$-monotonicity: if $\II\leqp\II'$ then
$\assv{\varphi}{\II}\leqp\assv{\varphi}{\II'}$. A two-valued truth
assignment $\ass$ is defined only for exact interpretations and
satisfies (1) and (2). 

\begin{defn}
We say that two formulas $\varphi_{1}$ and $\varphi_{2}$ over $\voc_{1}$ and $\voc_{2}$ respectively are $\voc$-\emph{equivalent}, with $\voc \subseteq (\voc_{1} \cap \voc_{2})$, if for any interpretation $\I$ over $\voc$, there exists an expansion $\I_{1}$ to $\voc_{1}$ for which $\assv{\varphi}{\I_{1}}=\Tr$ iff there exists an expansion $\I_{2}$ of $\I$ to $\voc_{2}$ for which $\assv{\varphi}{\I_{2}}=\Tr$. If in addition  $\voc_1=\voc_2=\voc$, we call  $\varphi_{1}$ and $\varphi_{2}$ equivalent; hey have the same truth value in all $\voc$-interpretations.
\end{defn}

\ignore{
\paragraph{Views.} A three-valued semantics for $\Log$ induces a
two-valued one. In particular, $\ass$ induces a standard satisfaction
relation: $\I\models\varphi$ if $\assv{\varphi}{\I}=\Tr$; in turn this
defines the entailment relation: $\varphi \models \psi$ if
$\I\models\varphi$ implies $\I\models\psi$. As common in Tarskian
semantics, we view an exact interpretation as a mathematical abstraction
of a states of affair and $\assv{\varphi}{\I}$ as whether this state of
affairs satisfies $\varphi$. Partial interpretations are not
representations of states of affairs but approximations of these. 
}

\section{Well-founded and stable semantics for $\Log$-rule sets}

\label{SecRules}

In this section, we show that for each logic $\Log^\ass$ with a
three-valued truth assignment $\ass$, it is possible to define a rule
logic under a well-founded and under a stable semantics. Let
$\Log^\ass$ be  such a logic.

\begin{definition}
An $\Log$-rule over $\voc$ is an expression $\forall \xxx (P(\xxx)\rul\varphi)$
with $P$ a predicate symbol in $\voc$, $\xxx$ a tuple of ``variable''
symbols and $\varphi\in \Log(\voc\cup\{\xxx\})$. 
An $\Log$-rule set over $\voc$ is a set of $\Log$-rules over
$\voc$. Rule sets will be denoted with $\D$.
\end{definition}
The set $\defp{\D}$ is the set of predicate symbols $P\in\voc$ that
occur in the head of a rule.  $\pars{\D}$ is the set of all other
symbols that occur in $\D$. Elements of $\defp{\D}$ are called {\em
  defined} symbols, the other ones are called {\em parameters} of
$\D$. 

\begin{definition} 
  A context $\OO$ of a $\Log$-rule set $\D$ is a
  $\voc\setminus\defp{\D}$-interpretation.
\end{definition}

For a given context $\OO$, the set $\{\II \mid
\restr{\II}{\pars{\D}}=\OO\}$ of partial $\voc$-interpretations
expanding $\OO$ is isomorphic to the set of partial sets of domain
atoms of $\defp{\D}$ in $\OO$. Thus, given $\OO$, a partial set of
domain atoms specifies a unique partial interpretation $\II$ expanding
$\OO$ and vice versa.

\ignore{
With this convention in place, we can define the immediate consequence operator for $\D$ in context $\OO$. 
\begin{definition}
Given a $\Log$-definition $\D$ and context $\OO$, the immediate consequence operator $\ICO{\D}{\OO}$ is a mapping of  expansions $\OO$ such that if $\II'=\ICO{\D}{\OO}(\II)$ if for every defined domain atom $P(\ddd)$:
\[\assv{P(\ddd)}{\II'}= Max_\leq(\{\assv{\varphi[\ddd]}{\II} \mid (\forall \xxx: P(\xxx)\rul\varphi) \in \D\}\]
\end{definition}

\begin{proposition}
$\ICO{\D}{\OO}$ is $\leqp$-monotone and preserves exactness.
\end{proposition}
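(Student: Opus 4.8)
The plan is to reduce both claims to the corresponding properties of the truth assignment $\ass$ --- $\leqp$-monotonicity (property~(3)) and exactness (property~(2)) --- together with two elementary facts about the three-element chain $(\Three,\leq)$: that its truth-order supremum $\max_{\leq}$ is itself $\leqp$-monotone, and that $\Two$ is closed under $\max_{\leq}$. I would start by recording two observations about the definition of the operator: $\ICO{\D}{\OO}(\II)$ leaves the $\pars{\D}$-part fixed at $\OO$, so it is enough to reason about the values of the defined domain atoms; and for a defined atom $P(\ddd)$ the family $\{\,\assv{\varphi[\ddd]}{\II}\mid(\forall\xxx(P(\xxx)\rul\varphi))\in\D\,\}$ is indexed by the rules of $\D$ with head $P$, hence by the \emph{same} index set for every $\II$ (when this set is empty one reads $\max_{\leq}\emptyset=\Fa$, the $\leq$-least element).

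For $\leqp$-monotonicity I would fix $\II_1\leqp\II_2$, both expansions of $\OO$. For each defined atom $P(\ddd)$ and each rule $\forall\xxx(P(\xxx)\rul\varphi)$ of $\D$, property~(3) of $\ass$ yields $\assv{\varphi[\ddd]}{\II_1}\leqp\assv{\varphi[\ddd]}{\II_2}$, so the two rule-indexed families above compare term by term in $\leqp$. The remaining step is to check that $\max_{\leq}$ respects this, i.e. $v_i\leqp w_i$ for all $i$ implies $\max_{\leq}\{v_i\}\leqp\max_{\leq}\{w_i\}$; this is a short case analysis on $\max_{\leq}\{v_i\}$. If it is $\Un$ we are done, since $\Un$ is $\leqp$-least. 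If it is $\Tr$, then $v_j=\Tr$ for some $j$, and $\leqp$-maximality of $\Tr$ forces $w_j=\Tr$, so $\max_{\leq}\{w_i\}=\Tr$. If it is $\Fa$, then every $v_i=\Fa$, and $\leqp$-maximality of $\Fa$ forces every $w_i=\Fa$, so $\max_{\leq}\{w_i\}=\Fa$. Since $\ICO{\D}{\OO}(\II_1)$ and $\ICO{\D}{\OO}(\II_2)$ also coincide with $\OO$ on the parameters, this gives $\ICO{\D}{\OO}(\II_1)\leqp\ICO{\D}{\OO}(\II_2)$.

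For exactness I would take an exact $\I$. By property~(2) of $\ass$, $\assv{\varphi[\ddd]}{\I}\in\Two$ for every rule, and $\max_{\leq}$ of a family of elements of $\Two$ is again in $\Two$ (it is $\Tr$ when some element is $\Tr$ and $\Fa$ otherwise). Hence every defined domain atom gets a value in $\Two$ under $\ICO{\D}{\OO}(\I)$, while on the parameters $\ICO{\D}{\OO}(\I)$ equals the exact interpretation $\OO$; so $\ICO{\D}{\OO}(\I)$ is exact.

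I do not expect a real obstacle: the proposition is essentially bookkeeping on top of the axioms of $\ass$. The one place where I would be careful is the domain-element substitution concealed in the notation $\assv{\varphi[\ddd]}{\II}$, with $\ddd$ a tuple of (possibly higher-order) domain elements. I would make this precise by adjoining to the vocabulary fresh constants naming the components of $\ddd$ and interpreting them by themselves, and then verify that properties~(1)--(3) of $\ass$ survive this expansion --- in particular that adding the same names to $\II_1$ and $\II_2$ preserves $\II_1\leqp\II_2$, and that expanding an exact interpretation stays exact --- after which the two reductions above go through unchanged.
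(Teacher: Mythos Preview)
Your proof is correct. The paper does not actually give a proof of this proposition: the definition of $\ICO{\D}{\OO}$ and this proposition appear only inside an \texttt{\textbackslash ignore\{\ldots\}} block, stated without argument, so there is no authorial proof to compare against. Your reduction to properties~(2) and~(3) of the truth assignment $\ass$ together with the elementary lattice facts about $\max_{\leq}$ on $\Three$ is exactly the intended (and essentially only) route, and your case analysis showing that $\max_{\leq}$ is $\leqp$-monotone is clean and complete. The care you take with the substitution $\varphi[\ddd]$ is more than the paper itself exercises anywhere; it is a fair point but, as you anticipate, causes no trouble.
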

}

We call a set of domain atoms  a $\Tr$-set, respectively $\Un$-set, $\Fa$-set of partial interpretation $\II$ if its elements have truth value $\Tr$, respectively $\Un$, $\Fa$ in  $\II$. 

\begin{definition}
A partial interpretation $\II$ is  closed under $\D$ if for any domain atom $P(\ddd)$ and rule $\forall \xxx (P(\xxx)\rul \varphi) \in \D$, if $\assv{\varphi[\ddd]}{\II} =\Tr$ then $\assv{P(\ddd)}{\II}=\Tr$.
\end{definition}

\begin{definition}
An unfounded set of $\D$ in   $\II$  is a $\Un$-set $U$ of defined domain atoms in $\II$, for which every atom $P(\ddd)\in U$ and rule  $\forall \xxx (P(\xxx)\rul \varphi) \in \D$,  $\assv{\varphi[\ddd]}{\II[U:\Fa]} =\Fa$.
\end{definition}

\begin{definition}\label{DefPartialStable}
A partial interpretation $\II$ extending context $\OO$ is a {\em partial stable interpretation} of $\D$ if 
\begin{enumerate}[leftmargin=*]
\item for each domain atom $P(\ddd)$, $P(\ddd)^\II=Max_\leq\{\assv{\varphi[\ddd]}{\II} \mid \forall \xxx(P(\xxx)\rul\varphi) \in \D\}$;

\item (prudence) there exists no non-empty $\Tr$-set $T$ and no (possibly empty) $\Un$-set $U$ of $\II$ such that $\II[T:\Un][U:\Tr]$ is closed under $\D$;

\item (braveness) the only unfounded set of $\D$ in $\II$ is $\emptyset$.
\end{enumerate}
\end{definition}

\begin{definition}
We call a partial interpretation $\II$ a well-founded interpretation of $\D$ if $\II$ is the $\leqp$-least partial stable model $\II'$ of $\D$ such that $\restr{\II'}{\pars{\D}}=\restr{\II}{\pars{\D}}$. 
\end{definition}

\begin{definition}
We call an (exact) interpretation $\I$ a stable interpretation of $\D$ if $\I$ is an exact partial stable model of $\D$. 
\end{definition}
Given that a stable $\I$ has only the empty $\Un$-set, conditions (2) and (3) simplify to that there is no non-empty $\Tr$-set $T$ of $\II$ such that $\II[T:\Un]$ is closed under $\D$.

\ignore{
\footnote{The definition of well-founded interpretation of
  a definition $\D$ is non-constructive. A constructive definitions is
  as follows.   Let us assume, without loss of generality, that a definition
  $\D$ consists of exactly one rule $\forall \xxx
  (P(\xxx)\rul\varphi_P[\xxx])$ for each defined predicate $P$, and
  for a domain atom $A$ of the form $P(\ddd)$, let $\varphi_A$ denote
  $\varphi_P[\ddd]$. We define that interpretation $\II'$ is a
  $\D$-refinement of interpretation $\II$ if there is a $\Un$-set $U$
  of $\II$ such that one of the following conditions is satisfied:
  \begin{itemize}
  \item for each $A\in U$, ${\varphi_A}^\II=\Tr$ and
    $\II'=\II[U:\Tr]$; or
  \item for each $A\in U$, ${\varphi_A}^{\II'}=\Fa$ and
    $\II'=\II[U:\Fa]$.
  \end{itemize}
  Then $\II$ is a well-founded interpreteation if it is the limit of
  each maximally long possibly transfinite sequence
  $\struct{\II_\alpha}_{0\leq\alpha}$ of partial structures that is
  strictly increasing in precision, such $\II_0$ is identical to $\II$ except that all defined atoms are $\Un$, and each next sequence element $\II_{\alpha+1}$  is a refinement of $\II_{\alpha}$. }
}

\begin{proposition}
Let $\varphi,\varphi'$ be equivalent under $\ass$ (same truth value in all partial interpretations). Then substituting $\varphi$ for $\varphi'$ in the body of a rule of $\D$ preserves the class of partial stable (hence, well-founded and stable) interpretations. 
\end{proposition}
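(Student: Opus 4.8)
The plan is to show that every defining clause of a partial stable interpretation in Definition~\ref{DefPartialStable} — together with the auxiliary notions of \emph{closed under $\D$} and \emph{unfounded set} — depends on the bodies $\varphi$ only through their truth values $\assv{\varphi[\ddd]}{\II}$ (and $\assv{\varphi[\ddd]}{\II[U:\Fa]}$, $\assv{\varphi[\ddd]}{\II[T:\Un][U:\Tr]}$) in partial interpretations $\II$. Since $\varphi$ and $\varphi'$ are assumed to have the same truth value in \emph{all} partial interpretations, each of these quantities is unchanged when $\varphi$ is replaced by $\varphi'$ in the body of any rule, so the three conditions hold for $\D$ with the substituted body exactly when they hold for $\D$.

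Concretely, I would fix $\D$, pick a rule $r = \forall\xxx(P(\xxx)\rul\varphi)\in\D$, let $\D'$ be the result of replacing $\varphi$ by the equivalent $\varphi'$ in $r$, and verify the following substitution invariances, each a one-line consequence of equivalence of $\varphi$ and $\varphi'$ in all partial interpretations:
\begin{enumerate}[leftmargin=*]
\item For every partial $\II$ and every $\ddd$, $\assv{\varphi[\ddd]}{\II}=\assv{\varphi'[\ddd]}{\II}$; hence $Max_\leq\{\assv{\psi[\ddd]}{\II}\mid\forall\xxx(P(\xxx)\rul\psi)\in\D\}$ is the same for $\D$ and $\D'$, so condition~(1) is preserved.
\item "Closed under $\D$" and "closed under $\D'$" coincide, since closure is stated purely in terms of $\assv{\psi[\ddd]}{\II}=\Tr$; this makes the prudence condition~(2) transfer verbatim.
\item For any $\Un$-set $U$, the condition $\assv{\varphi[\ddd]}{\II[U:\Fa]}=\Fa$ is equivalent to $\assv{\varphi'[\ddd]}{\II[U:\Fa]}=\Fa$ because $\II[U:\Fa]$ is again a partial interpretation; hence $U$ is unfounded for $\D$ in $\II$ iff it is unfounded for $\D'$ in $\II$, giving the braveness condition~(3).
\end{enumerate}
Combining (1)--(3), $\II$ is a partial stable interpretation of $\D$ iff it is one of $\D'$; the statements for well-founded and stable interpretations then follow immediately, since those are defined as, respectively, the $\leqp$-least partial stable model agreeing on $\pars{\D}=\pars{\D'}$ and the exact partial stable models.

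I do not anticipate a genuine obstacle here: the proposition is essentially a bookkeeping argument establishing that the defining machinery is "truth-value–functional" in the rule bodies. The only point requiring a little care is that the revised interpretations appearing in the prudence and braveness clauses — $\II[T:\Un][U:\Tr]$ and $\II[U:\Fa]$ — are still legitimate partial interpretations, so that the hypothesis "$\varphi$ and $\varphi'$ have the same truth value in \emph{all} partial interpretations" actually applies to them; this is immediate from the way these revisions were defined in the Preliminaries. One should also note that $\pars{\D}=\pars{\D'}$ and $\defp{\D}=\defp{\D'}$, since replacing a body by an equivalent one over the same vocabulary changes neither the heads nor the parameter set, which is what lets the well-founded and stable cases reduce cleanly to the partial-stable case.
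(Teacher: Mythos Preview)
Your proposal is correct and follows exactly the paper's approach: the paper's proof is a single sentence observing that the conditions defining a partial stable interpretation are phrased entirely in terms of $\ass$, which cannot distinguish $\varphi$ from $\varphi'$. Your argument is simply a careful unpacking of that one line, checking conditions (1)--(3) individually, so there is no substantive difference in strategy.
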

\begin{proof} This is trivial, since the conditions of partial stable interpretation are defined in terms $\ass$, which cannot distinguish $\varphi$ from $\varphi'$. 
\end{proof}


\paragraph{Two logics.}  Using the above two concepts we define two
rule logics. Expressions in both logics are  the same: finite sets of rules. 

\begin{definition}
For  logic $\Log^\ass$, we define  logic $R(\Log^\ass)^w$ where $R(\Log^\ass)(\voc)$ is the collection of finite rule sets over $\voc$ and $w$ the two-valued truth assignment  defined as $\assw{\D}{w:\I}=\Tr$ if $\I$ is an exact well-founded interpretation of $\D$ and $\assw{\D}{w:\I}=\Fa$ otherwise.
\end{definition}

\begin{definition}
For logic $\Log^\ass$, we define the logic $R(\Log^\ass)^{st}$ where $R(\Log^\ass)(\voc)$ is  as above and $st$ is the two-valued truth assignment defined as $\assw{\D}{st:\I}=\Tr$ if $\I$ is an exact stable interpretation of $\D$ and  $\assw{\D}{st:\I}=\Fa$ otherwise. 
\end{definition}

For the logic FO$^\kass$, with FO first order logic and $\kass$ the
3-valued Kleene truth assignment \cite{Kleene52}, the rule formalism
$R(FO^\kass)^w$ corresponds to the (formal) definitions in the logic
FO(ID) \cite{Denecker:CL2000,tocl/DeneckerT08} while the formalism
$R(FO^\kass)^{st}$ corresponds to the rule formalism in the logic
ASP-FO \cite{DeneckerLTV12}.

In \cite{KR/DeneckerV14},
the relation between the main forms of (informal) {\em definitions} found in
mathematical text, and rule sets in $R(FO^\kass)^w$ was analyzed.   Not
all rule sets of $R(FO)$ express sensible (informal) definitions, but
for those that do, the well-founded interpretations are exact and
correctly specify the defined sets. Therefore, a rule set $\D$ was called a
{\em paradox-free} or {\em total definition} in an exact context $\OO$ if
its well-founded interpretation expanding $\OO$ is exact. For
paradox-free rule sets, $w$ and $st$ coincide. Important classes of
rule sets are always paradox-free: non-recursive, monotone inductive
rule sets, and rule sets by ordered or iterated induction over some
well-founded induction order as defined in \cite{KR/DeneckerV14}.  

\ignore{
\paragraph{Remark} If the well-founded model of $\D$ in $\OO$ is
exact, then it is the unique stable model of $\OO$. It follows that
for rule sets that express sensible definitions, well-founded and
stable semantics coincide. Hence, the discussion between these
semantics is not so much what is the right meaning of rule sets as it
is about what rule sets are sensible.  In the logic FO(ID), the more
restrictive view is that rule sets with non-exact well-founded model
(in some context $\OO$) contain a semantic error. But for rule sets
that are sensible, the two semantics are the same. WEG??
}

 \ignore{
  \footnote{More precisely, partial structures $\II$ extending $\OO$
    correspond to pairs $(a,b)$ where $a$ is the set of true domain
    atoms, and $b$ the set of non-false atoms. With $\D$, we can
    define an approximator $A$ mapping pairs $(a,b)$ to the pair
    $(a',b')$ where $a'$ is the set of domain atoms $P(\ddd)$ with a
    rule $\forall\xxx(P(\xxx)\rul\varphi)\in\D$ such that
    $\varphi[\ddd]^\II=\Tr$ and $b'$ is the set of domain atoms
    $P(\ddd)$ with a rule $\forall\xxx(P(\xxx)\rul\varphi)\in\D$ such
    that $\varphi[\ddd]^\II\neq\Fa$. This is an approximator.  A
    partial stable fixpoing of $A$ is a pair $(a,b)$ such that 1)
    $A(a,b)=(a,b)$, 2) (prudence) if for some $z\leq b$, $A(z,b)_1\leq
    z$, then $a\leq z$, and 3) (braveness) if for some $a\leq z\leq
    b$, $A(a,z)_2\leq z$, then $b\leq z$. NU ZOU IK ER NOG EEN
    ARGUMENT KUNNEN BIJPLAKKEN DAT LAAT ZIEN DAT DEZE CONDITIES
    OVEREENKOMEN.}  }

\ignore{
\paragraph{Views.} In \cite{}, it was pointed that  inductive definitions in mathematical text are used to define formal objects but are not formal objects themselves. Often they are phrased as sets of informal rules.
\begin{example}
\begin{definition} For any node $A$ of graph $G$, we define the set $R_A$ by induction:
\begin{itemize}
\item Node $A\in R_A$.
\item  $x\in R_A$ if for some $y$, $(x,y)$ is an edge of $G$ and $y\in R_A$.
\end{itemize}
\end{definition}
This informal definition defines the set of nodes that can reach $A$ in $G$. Here $G$ and $A$ are parameters. 
\end{example}
Such rule sets often can easily be expressed  as formal rules with FO bodies.
\begin{example}
\[ \defin{R(A)\rul\\ \forall x(R(x)\rul\exists y(G(x,y)\land R(y)))}\]
\end{example}
The main claim is an empirical science thesis: if 
\begin{quote}
If $\D$ is a  monotone inductive definition or an inductive definition by well-founded induction over some well-fopunded induction order, and $\D$ is a faithful translation of it, then in any suitable context $\OO$, the well-founded interpretation $\II$ of $\D$ in $\OO$ will be two-valued and the values  $P^\II$ of defined symbols of $\D$ will be the corresponding sets defined by $D$. If $\II$ is not exact, then $\D$ is not a sensible definition in $\OO$. 
\end{quote} 
It is a thesis, hence cannot be proven.  In Poppers tradition
of empirical science, it can be easily falsified. One counterexample
suffices: one sensible informal inductive definition of the mentioned
types (monotone or ordered) in context $\OO$, with a faithful formal
translation $\D$, such that either the well-founded interpretation of
$\D$ in $\OO$ is not exact, or the value $P^\II$ is not the defined
value.  For example, one graph $G$ with node $A$ such that $R^\II$ is
not the set of nodes that can reach $A$ in $G$ would falsify the
thesis. In this case, this is impossible since $\D$ is a monotone rule
set, $\II$ is the least set satisfying it, and this is the defined set
of the informal definition. But perhaps other counterexamples exist.
So far, the thesis stands. For an argument why , see \cite{}.
}

\section{Compositional framework for building logics with definitions}

This section effectively defines an infinite collection of logics.  We
define compositional constructs which add new expressions, such as
definitions, to an existing logic.  By iterating such extension steps,
these constructs can be nested.


\ignore{
The logics constructed in this section are formal languages $\Log$
equipped with a class of partial and exact interpretations, with a
three-valued truth assignment $\ass$ satisfying $\leqp$-monotonicity
and exactness. By limiting $\ass$ to exact interpretations, a
two-valued Tarskian model semantics is induced, a satisfaction
relation $\models$ between (exact) interpretations and
$\Log$-expressions or theories, an entailment relation $\models$
between pairs of theories or expressions of $\Log$. }

\subsection{Approximating boolean functions} 
\label{ssec:extending:boolean:functions}
We frequently need to extend a boolean function defined on a domain $X$
of exact values (e.g., $\Two$, exact sets, exact interpretations, or
tuples including these) to the domain $\mathcal X,\leqp$ of partial
values. Examples are the boolean functions associated with connectives
$\neg, \land,\dots$, or the truth assigments $w$ and $st$ of $R(\Log^\ass)$
defined on $\Int(\Sigma)$. Given such a function $F: X\ra\Two$, we
search for an approximation ${\mathcal F}:{\mathcal X}\ra \Three$ such
that:
\begin{itemize}
\item $\leqp$-montone: if ${\mathbf x}\leqp{\mathbf  y} \in {\mathcal X}$ then ${\mathcal F}({\mathbf x})\leqp {\mathcal F}({\mathbf  y})$;
\item exact and extending: for  $x\in X$,  ${\mathcal F}(x)=F(x)$. 
\end{itemize}

\begin{defn} We  define the ultimate approximation $\ol{F}: {\mathcal X}\ra \Three$ of $F$ by defining $\ol{F}(\mathbf x)=\glb_{\leqp}\{F(x)\mid {\mathbf x}\leqp x\} \in\Three$.
\end{defn}
\begin{proposition}
$\ol{F}$ is the most precise $\leqp$-monotone exact extension of $F$. 
\end{proposition}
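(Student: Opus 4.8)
The goal is to show that $\ol{F}$ is (a) a $\leqp$-monotone exact extension of $F$, and (b) the most precise such extension, i.e.\ any other $\leqp$-monotone exact extension $\mathcal{F}$ satisfies $\mathcal{F}(\mathbf{x}) \leqp \ol{F}(\mathbf{x})$ for all $\mathbf{x} \in \mathcal{X}$. So the proof splits naturally into three parts: exactness, monotonicity, and maximality.

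First, exactness. For $x \in X$, the set $\{F(x') \mid x \leqp x'\}$ contains only $x' = x$ itself among exact values (exact values are $\leqp$-maximal), but more carefully: any $x'$ with $x \leqp x'$ and $x'$ exact must equal $x$ since $x$ is already maximal in $\leqp$; and if we also allow non-exact $x'$... wait — here $F$ is only defined on $X$, so the glb ranges only over exact $x'$ with $x \leqp x'$, which forces $x' = x$. Hence $\ol{F}(x) = \glb_{\leqp}\{F(x)\} = F(x)$. (One should remark that $\Three$ with $\leqp$ is a complete meet-semilattice, so the glb always exists; for the empty set the glb would be the top, but the set is never empty since $\mathbf{x} \leqp \mathbf{x}$ when $\mathbf{x}$ is exact, and in general every partial value is $\leqp$ some exact value, so the indexing set is nonempty.)

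Second, monotonicity. Suppose $\mathbf{x} \leqp \mathbf{y}$. Then $\{x \in X \mid \mathbf{y} \leqp x\} \subseteq \{x \in X \mid \mathbf{x} \leqp x\}$ by transitivity of $\leqp$. Taking glb of a subset gives something $\geqp$ the glb of the superset, so $\ol{F}(\mathbf{x}) = \glb_{\leqp}\{F(x) \mid \mathbf{x}\leqp x\} \leqp \glb_{\leqp}\{F(x)\mid \mathbf{y}\leqp x\} = \ol{F}(\mathbf{y})$.

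Third, maximality — this is the part requiring the monotonicity hypothesis on the competitor, and I expect it to be the only place a subtlety can hide. Let $\mathcal{F} : \mathcal{X} \ra \Three$ be any $\leqp$-monotone exact extension of $F$. Fix $\mathbf{x} \in \mathcal{X}$. For every exact $x$ with $\mathbf{x} \leqp x$, monotonicity of $\mathcal{F}$ gives $\mathcal{F}(\mathbf{x}) \leqp \mathcal{F}(x) = F(x)$ (using that $\mathcal{F}$ extends $F$). So $\mathcal{F}(\mathbf{x})$ is a $\leqp$-lower bound of $\{F(x) \mid \mathbf{x} \leqp x\}$, whence $\mathcal{F}(\mathbf{x}) \leqp \glb_{\leqp}\{F(x)\mid \mathbf{x}\leqp x\} = \ol{F}(\mathbf{x})$. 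This establishes that $\ol{F}$ dominates every competitor in precision, i.e.\ it is the most precise one. The only real obstacle is making sure the indexing set $\{x \in X \mid \mathbf{x} \leqp x\}$ is always nonempty (otherwise $\ol{F}(\mathbf{x})$ would be the $\leqp$-top, which is not an element of $\Three$ in the usual presentation) — this holds because every partial value approximates at least one exact value, a fact implicit in the setup of partial sets and interpretations; I would state it explicitly as a preliminary observation.
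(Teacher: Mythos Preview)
Your proposal is correct and follows essentially the same approach as the paper: exactness from $\leqp$-maximality of exact elements, monotonicity from transitivity of $\leqp$, and maximality of precision from the greatest-lower-bound characterization. The paper's own proof is considerably terser (it dismisses the ``most precise'' part as ``clear''), so your fully spelled-out argument for that step, and your explicit observation that the indexing set $\{x\in X\mid \mathbf{x}\leqp x\}$ is nonempty, are welcome additions rather than deviations.
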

\begin{proof} $\leqp$-monotonicity follows from the transitivity of $\leqp$. Exactness, from the fact that elements of $\Int(\voc)$ are maximally precise. That $\ol{F}$ is the most precise approximation of $F$ is clear as well.
\end{proof}
Several important examples follow. For a standard connective
$c\in\{\land, \lor, \neg, \mim, \equi\}$ with corresponding boolean function 
${\mathbf c}:\Two^n\ra\Two$,
the function $\ol{\mathbf c}:\Three^n\ra \Three$ is the three-valued truth
function used in the Kleene truth assignment $\kass$.

The semantics of quantifiers $\forall, \exists$ and generalized
quantifiers such as aggregates are given by functions on sets (or
tuples including sets). E.g., for quantification over domain $D$ these
are the boolean functions $\forall_D, \exists_D$ defined
$\forall_D(S)=\Tr$ iff $D\subseteq S$, and $\exists_D(S)=\Tr$ iff
$D\cap S\neq \emptyset$. Two commonly used numerical aggregate
functions are cardinality $\#$ and $sum$ (the latter mapping (finite)
sets $S$ of tuples $\bar{d}$ to $\sum_{\bar{d}\in S}d_1$). For every
numerical aggregate function $Agg$ and boolean operator $\compop\in
\{=,<,>\}$ on numbers, the boolean function $Agg_\compop$ maps tuples
$(S,n)$ to $\Tr$ iff $Agg(S)\compop n$.

For all these higher order boolean functions $F$, $\ol{F}$ is the most
precise approximation on three-valued sets. The three-valued
aggregate functions $\ol{Agg}_{\compop}$ were introduced originally in
\cite{tplp/PelovDB07} to define stable and well-founded semantics for
aggregate logic programs.  The functions $\ol{\forall}_D,
\ol{\exists}_D$ are used in the Kleene truth assignment $\kass$: let $D$
be $\ty(x)^\II$, and ${\mathcal S}=\{(d,\kv{\varphi}{\II[x:d]})\mid d\in D\}$, i.e. the three valued set mapping
domain elements $d\in D$ to $\kv{\varphi}{\II[x:d]})$, one defines
$\assw{(\forall x\ \varphi)}{\kass:\II}=\ol{\forall}_D({\mathcal S})
=Min_\leq\{{\mathcal S}(d)\mid d\in D\} =
Min_\leq\{\kv{\varphi}{\II[x:d]})\mid d\in D\}$.

For any two-valued truth assigment $\ass$ on $\Log$, $\ol{\ass}$ is a
sound three-valued truth assignment.  In case of FO and its truth
assignment $\ass$, $\ol{\ass}$ was introduced in
\cite{jp/vanFraassen66} where it was called the supervaluation
$\spass$. $\spass$ is not truth functional, for if $p^\II=q^\II=\Un$,
then $\assw{(p\lor\neg p)}{\spass:\II}=\Tr\neq\assw{(p\lor
  q)}{\spass:\II}=\Un$ while  the components of the two
disjunctions have the same supervaluation.  A truth-functional
definition of a three-valued truth assignment is obtained by using the
ultimate approximations of the boolean functions associated to
connectives and quantifiers. This yields exactly the Kleene truth
assigment $\kass$. It is $\leqp$-monotone, exact and extending, and
strictly less precise than $\spass$ as can be seen from
$\assw{(p\lor\neg p)}{\kass:\II}=\Un$: the supervaluation ``sees'' the
logical connection between $p$ and $\neg p$ in this tautology while
$\kass$ does not. 

Other applications serve to extend, for arbitrary logic $\Log^\ass$,
the two-valued wellfounded and stable truth assignments $w$ and $st$
on $R(\Log^\ass)$ to three-valued extensions $\ol{w}, \ol{st}$. Here,
it holds that $\assw{\D}{\ol{w}:\II}=\Tr$ (respectively $\Fa$) if every
(respectively, no) instance $\I$ of $\II$ is a well-founded
interpretation of $\D$.

\subsection{Composing logics by combining logic constucts}
\label{sec:Composing}
A standard way of defining the syntax of a logic is through a set of
often inductive syntactical rules, typically described in Backus Naur
Form (BNF). The truth assignment $\ass$ is then defined by recursion
over the structure of the expressions.  Below, we identify a language
construct $C$ with a pair of a syntactical and a semantical rule. The
rules below construct, for a language construct $C$, a new logic
$C(\Log^\ass)^{\ass'}$ with expressions obtained by applying $C$ on
subexpressions of $\Log^\ass$ and $\ass'$ a truth assignment for
$C(\Log)$. Afterwards, complex logics with multiple and nested
language constructs can be built by iterating these construction
steps.
  
\begin{itemize}
\item Atom$^1$ and Atom$^2$: for first order predicates $p$  and second order ones $P$  respectively. $\ttt$ is a tuple of terms, $\xxx$ of variables. 
\[Atom^1 ::= p(\ttt)\text{ where } p(\ttt)^{\ass':\II}= p^\II(t_1^\II,\cdots,t_n^\II)\]
\[Atom^2::= P(\xxx)\text{ where } P(\xxx)^{\ass':\II}= P^\II(x_1^\II,\cdots,x_n^\II)\]
\item N-ary connectives $c  \in   \{\neg, \land, \lor, \Rightarrow, \Leftrightarrow\}$, 
\[c({\Log}^{\ass}) ::= c(\alpha_1,\dots,\alpha_n) \text{ where } \assw{c(\alpha_1,\dots,\alpha_n)}{\ass':\II}=\ol{\mathbf c}(\assv{\alpha_1}{\II},\cdots,\assv{\alpha_n}{\II})\]
\item Generalized quantifiers $C \in \{\forall, \exists, Agg_\compop\}$. 
Below, $C\langle \xxx,\alpha,z\rangle$ denotes the syntactic expression, e.g., $\forall\langle x,\alpha\rangle$ is $\forall x\ \alpha$; $Agg_\compop\langle \xxx,\alpha,z\rangle$ is $Agg\{\xxx:\alpha\}\compop z$. \[C({\Log}^{\ass}) ::= C\langle\xxx,\alpha,z\rangle \text{ where } \assw{(C\langle \xxx,\alpha,z\rangle)}{\ass':\II}=\ol{\mathcal C}(\{(\ddd,\assv{\alpha}{\II[\xxx:\ddd]})|\ddd\in\ty(\xxx)^\II\},z^\II)\]

\item Definitions as rule sets ($R_w$) (similarly, one could define $R_{st}$):   
\[R_w({\Log}^{\ass}) ::= R(\Log) \text{ where }\assw{\D}{\ass':\II}=\assw{\D}{\ol{w}:\II}\]
where $w$ is the  well-founded assignment of $R(\Log^\ass)${ as defined in Section 4.}
\end{itemize}

\paragraph{Building logics.} \label{sub:soidDef} Using the above rules of language constructs, an (infinite)
class of logics with three-valued semantics can be built. Moreover, every combination of the above rules gives rise to a valid three-valued truth assignment.
\begin{proposition}
 Every (sub)set of the above language constructs (possibly closed under recursive application) defines a logic with a proper three-valued truth assignment (i.e. it is $\leqp$-monotone, exact and extending).
\end{proposition}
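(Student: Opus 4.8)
\textbf{Proof plan.}
The statement is an invariance property, so the natural route is structural induction. First I would make the induction explicit: a logic obtained by closing a chosen set $\mathcal K$ of the listed constructs under recursive application has its expressions generated inductively — each is either an atom (from Atom$^1$/Atom$^2$), a compound $c(\alpha_1,\dots,\alpha_n)$, a quantified expression $C\langle\xxx,\alpha,z\rangle$, or a finite rule set $\D$ — and in each compound case the immediate subexpressions are expressions of a logic already present in the construction hierarchy. It therefore suffices to show that the three conditions defining a proper three-valued truth assignment — locality (property (1) of a truth assignment), exactness, and $\leqp$-monotonicity — propagate from all strict subexpressions of an expression to the expression itself; if one prefers to start the construction from a fixed logic $\Log^\ass$ rather than from the atoms, that logic is proper by hypothesis and serves as the base case. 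For the Atom constructs the conditions are immediate: function symbols and variables are interpreted exactly and non-predicate symbols are fixed along $\leqp$, so the tuple $(t_1^\II,\dots,t_n^\II)$ (resp.\ of variable values) does not change when $\II$ is refined or restricted, while $p^\II$ (resp.\ $P^\II$) is pointwise $\leqp$-monotone in $\II$ and two-valued on exact $\I$, and the atom's value depends only on its predicate symbol, giving locality.

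For a connective $c$, recall that $\ol{\mathbf c}$ is the ultimate approximation of the boolean function $\mathbf c$, hence (by the proposition that $\ol F$ is the most precise $\leqp$-monotone exact extension of $F$) $\leqp$-monotone and equal to $\mathbf c$ on $\Two^n$. Then $\leqp$-monotonicity of $\assv{c(\alpha_1,\dots,\alpha_n)}{\cdot}$ follows by composing the induction hypothesis $\assv{\alpha_i}{\II}\leqp\assv{\alpha_i}{\II'}$ with monotonicity of $\ol{\mathbf c}$; on an exact $\I$ each $\assv{\alpha_i}{\I}\in\Two$, so the value is $\mathbf c$ of a two-valued tuple and hence in $\Two$; and the free symbols of $c(\alpha_1,\dots,\alpha_n)$ are the union of those of the $\alpha_i$, so locality descends. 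For the generalized quantifiers the same pattern applies once two routine facts are checked: revising an interpretation by a fixed value preserves $\leqp$, so $\II[\xxx:\ddd]\leqp\II'[\xxx:\ddd]$ whenever $\II\leqp\II'$; and $\leqp$-comparable interpretations assign the same domain to $\ty(\xxx)$ and the same value to $z$. The three-valued set $\{(\ddd,\assv{\alpha}{\II[\xxx:\ddd]})\mid\ddd\in\ty(\xxx)^\II\}$ is then, by the induction hypothesis, pointwise $\leqp$-below the corresponding set for $\II'$, and $\ol{\mathcal C}$ — again an ultimate approximation — is $\leqp$-monotone and exact; locality holds since the free symbols are those of $z$ together with those of $\alpha$ other than $\xxx$.

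The real content is the rule-set construct $R_w$, and here I would appeal to Section~\ref{SecRules}. The induction hypothesis states that $\ass$ is a proper three-valued truth assignment on $\Log$, which is precisely the premise under which that section builds the two-valued assignment $w$ on $R(\Log^\ass)$, classifying an exact $\I$ as $\Tr$ or $\Fa$ according to whether it is the (necessarily existing and unique) exact well-founded interpretation of $\D$ with the matching context. Since $\assw{\D}{\ass'}=\ol{w}$ and $\ol w$ is the ultimate approximation of the two-valued function $w$, the proposition on $\ol F$ yields $\leqp$-monotonicity, exactness, and the \emph{extending} condition (agreement of $\ol w$ with $w$ on exact interpretations) in one stroke; locality reduces to locality of the well-founded construction, whose ingredients (closedness, unfounded sets, the fixpoint equation) are all phrased via $\ass$ applied to rule bodies, whose free symbols lie in $\voc\cup\{\xxx\}$, so by locality of $\ass$ the value of $\assw{\D}{\ol w :\II}$ depends only on $\restr{\II}{\voc}$. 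The one point I expect to need care is the well-definedness of $w$ itself — that a $\leqp$-least partial stable interpretation always exists and is unique — but this is delivered by the construction of Section~\ref{SecRules}, which consumes exactly the $\leqp$-monotonicity and exactness of $\ass$ that the induction hypothesis supplies; the argument for $R_{st}$ via $\ol{st}$ is identical. Since every expression is built in finitely many steps, the induction is well-founded and the proposition follows.
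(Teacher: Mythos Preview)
The paper states this proposition without proof, so there is no argument to compare against directly. Your structural-induction plan is the natural one and is correct in outline: the key observation is that every semantic rule in Section~\ref{sec:Composing} evaluates a compound expression by feeding the values of its immediate subexpressions (or the three-valued set they induce) into an ultimate approximation $\ol{\mathbf c}$, $\ol{\mathcal C}$, or $\ol w$, and the proposition in Section~\ref{ssec:extending:boolean:functions} already guarantees that any $\ol F$ is $\leqp$-monotone and agrees with $F$ on exact inputs. Composing this with the induction hypothesis yields exactness and $\leqp$-monotonicity at each step, and your locality checks are routine.

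One small correction: your worry about the well-definedness of $w$ is unnecessary. By definition, $\assw{\D}{w:\I}=\Tr$ precisely when the exact $\I$ is an exact well-founded interpretation of $\D$, and $\Fa$ otherwise; this is a well-defined two-valued function on $\Int(\voc)$ irrespective of whether a $\leqp$-least partial stable interpretation exists for every context. Hence $\ol w$ is available, and its $\leqp$-monotonicity and exactness follow immediately from the $\ol F$ proposition, with no appeal to existence or uniqueness of well-founded models needed.
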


For example, given a logic $\Log^\ass$, we define ${\Log'}^{\ass'}=
R(\Log^\ass)^{\ol{w}}$ by one application of $R_w$ on $\Log^\ass$.  By
iterating $R_w$, logics $(R(\Log^\ass)^{\ol{w}})^n$ with nested
definitions are built.  Every BNF in terms of the above language
constructs now implicitly defines a three-valued logic.  The
definition of first order logic FO$^\kass$ with $\kass$ the standard
three-valued Kleene truth assignment, can be descibed in BNF or more
compactly as $\{Atom^1,\land,\lor,\neg,\forall,\exists\}^*$ (here $*$
indicates recursive application of the construction rules). The logic
FO(ID) defined in \cite{tocl/DeneckerT08} is the union of logics
$FO^\kass$ and $R(FO^\kass)^{\ol{w}}$. A further extension is the new
logic $FO(ID^*) = \{Atom^1,\land,\lor,\neg,\forall,\exists,R_w\}^*$
which has definitions nested in formulas and definition rule bodies. A
logic in which templates can easily be embedded is $SO(ID^*) =
\{Atom^1,Atom^2,\land,\lor,\neg,\forall,\exists,R_w\}^*$. It is a second
order extension of FO(ID) which allows for nesting of definitions in rule bodies.

\ignore{
If we combine this technique with the one of the previous section, we
find that a suitable logic $\Log$ induces a logic $\Log\cup
D(\Log^\ass)$. An example of such a logic that appeared in the
literature is the logic FO(ID) \cite{2000,2008}. Here $\Log$ is
classical first order logic FO and $FO(ID) = FO\cup D(FO)$. Also the
logic $FO\cup St(FO)$ was introduced in \cite{ASP-FO} as the logic
ASP-FO. }

\section{Templates}
We envision  a library of application independent templates in
the form of second order definitions that encapsulate prevalent
patterns and concepts and that can be used as building blocks to
compose logic specifications. Below, we formally define the concepts
and show that non-recursive templates do not increase the
computational complexity of \foids and can be eliminated by a
rewriting process.

\subsection{Definition and usage}

We assume the existence of a set of template symbols. 
A template is a  context-agnostic  second order definition of template symbols. As such it should define and contain  only domain independent symbols: interpreted symbols and template symbols. A template might define a template symbol in terms of other template symbols and interpreted predicates, but not in terms of user-defined symbols. 

\begin{defn}
The template vocabulary $\Sigma_{Temp}$ is the vocabulary consisting of all interpreted symbols (such as arithmetic symbols) and all template symbols.
\end{defn}

\begin{defn}
A template is a second order definition $\D$ over $\Sigma_{Temp}$ such that  $\defined{\D}$ consists  of template symbols. 
\end{defn}
Thus the set of parameters $\pars{\D}$ of a template consists only of
interpreted symbols and template symbols.

The concepts used in Example \ref{lis:EquivRel} are now fully defined. Another common example is the template \lstinline{tc} expressing that Q/2 is the transitive closure of P/2, as shown in Example \ref{lis:TransClos}.
Note that this example cannot be written without a definition in the body of the template, so this further motivates our choice to allow definitions in the bodies of other definitions in our recursive construction of the logic $\soids$.
\begin{lstlisting}[caption={This template TC expresses that Q is the transitive
closure of P},label={lis:TransClos},mathescape=true]
{tc(P,Q) $\leftarrow$
	{Q(x,y) $\leftarrow$ P(x,y) $\lor$($\exists$ z: Q(x,z)$\land$Q(z,y))}.
}
\end{lstlisting}


Another notable aspect of this approach to templates is that recursive templates are well-defined. This enables us to write recursive templates, for example to define a range:
\begin{lstlisting}[caption={P is the range of integers from a to b},label={lis:recur},mathescape=true]
{range(P, a, b) $\leftarrow$
	{P(a).
	 P(x) $\leftarrow$ a < b $\land$ ($\exists$ Q : range(Q,a+1,b) $\land$ Q(x)).
	}
}
\end{lstlisting} 

It is possible to rewrite Example \ref{lis:recur} into a non-recursive template. Example \ref{lis:game} contains an example which is not rewritable in such a way.
\begin{lstlisting}[caption={\texttt{cur} is a winning position in a two-player game},label={lis:game},mathescape=true]
{win(cur,Move, IsWon) $\leftarrow$ IsWon(cur) $\lor$
	$\exists$ nxt : Move(cur,nxt) $\land$ lose(nxt,Move,IsWon).	
 lose(cur,Move, IsWon) $\leftarrow$ $\neg$IsWon(cur) $\land$ 
 	$\forall$ nxt : Move(cur,nxt) $\Rightarrow$ win(nxt,Move,IsWon).
}
	
\end{lstlisting}
This template defining $win$ and $lose$ by simultaneous definition, is a monotone second order definition and has a two-valued well-founded model. That it cannot be rewritten without recursion over second order predicates follows from the fact that deciding if a tuple belongs to a non-recursively defined second order predicate is in PH while deciding winning positions in generalized games is harder (if the polynomial hierarchy does not collapse) and this last problem corresponds to deciding elementship in the relation \lstinline{win} defined in  Example~\ref{lis:game}. 

\begin{definition} 
A template library $L$ is  a finite set of templates satisfying (1) every template is paradox-free; (2) every template symbol is defined in exactly one template; (3) the set of templates is hierarchically stratified: there is a strict order $<$ on template symbols such that for each $\D\in L$, if $P\in \defp{\D}, Q\in \pars{\D}$ then $Q<P$.
\end{definition}

\begin{proposition} For a template library $L$,  each interpretation $\struct$ not interpreting symbols of $\Sigma_{Temp}$ has a unique two-valued expansion $\struct'$ to $\Sigma_{Temp}$ that satisfies $L$.
\end{proposition}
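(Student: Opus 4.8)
The plan is to prove this by induction along the stratification, peeling off one template at a time. First I would note that the only freedom in an expansion $\struct'$ of $\struct$ to $\Sigma_{Temp}$ lies in the values of the template symbols: the interpreted symbols and interpreted base types carry a fixed interpretation, and the domains of the composite types occurring in $\Sigma_{Temp}$ are determined by $D=\dt^{\struct}$. By condition (2) the sets $\defp{\D}$ for $\D\in L$ partition the template symbols, so it suffices to show that the values of $\defp{\D}$ are, for each $\D\in L$, uniquely determined and jointly form an interpretation satisfying the conjunction of the templates. Next I would fix the right induction order: call $\D$ a \emph{predecessor} of $\D'$ if some symbol of $\defp{\D}$ occurs in $\pars{\D'}$. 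Using condition (3) and the strictness of $<$ one checks that the predecessor relation on the finite set $L$ has no cycles (a cycle of predecessors would, via (3), produce a $<$-cycle among chosen defined symbols of the templates on the cycle), hence it is well-founded and I can induct on it.

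For the induction, suppose all templates strictly preceding $\D$ have already been assigned unique values for their defined symbols, jointly extending $\struct$ together with the fixed interpretation of the interpreted symbols to an interpretation $\struct_{<\D}$. Since $\D$ is a template, $\pars{\D}$ contains only interpreted symbols and template symbols; each template symbol in $\pars{\D}$ is, by condition (2), a defined symbol of a unique template $\D''$, which is then a predecessor of $\D$ and has already been processed. Hence the context $\OO_{\D}:=\restr{\struct_{<\D}}{\pars{\D}}$ is completely determined. By condition (1), $\D$ is paradox-free, so its well-founded interpretation expanding $\OO_{\D}$ is exact; since by definition the well-founded interpretation is the $\leqp$-least partial stable model with parameter-restriction $\OO_{\D}$, it is the unique exact interpretation $\struct_{\D}$ with $\restr{\struct_{\D}}{\pars{\D}}=\OO_{\D}$ and $\assw{\D}{w:\struct_{\D}}=\Tr$. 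I then extend $\struct_{<\D}$ by $\restr{\struct_{\D}}{\defp{\D}}$; because $w$-satisfaction of $\D$ depends only on the restriction to $\pars{\D}\cup\defp{\D}$ and the $\defp{\D}$ are pairwise disjoint, this extension still satisfies all already-processed templates, and any expansion of $\struct$ satisfying the library must agree with it on $\defp{\D}$ by the induction hypothesis (which fixes $\OO_{\D}$) and the uniqueness of the well-founded model in that context. Taking $\struct'$ to be the final interpretation after processing all of $L$ gives both existence and uniqueness.

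The routine parts are the vocabulary bookkeeping and the unfolding of ``$\struct'$ satisfies $L$'' into ``$\struct'$ satisfies each $\D\in L$ under the $w$-assignment'' (equivalently $st$, since the templates are paradox-free). The genuinely load-bearing step — and the one I would be most careful with — is the reduction of each stage to a single well-founded model computation in a \emph{fixed} context: this rests on condition (3) to guarantee that no parameter template symbol of $\D$ is still ``open'' when we reach $\D$, and on paradox-freeness (1) to guarantee the resulting well-founded model is two-valued, so that it is itself an exact interpretation ready to serve as context for later templates. For full rigour one would also record the harmless point, supplied by condition (2), that $\Sigma_{Temp}$ contains no template symbol left undefined by $L$.
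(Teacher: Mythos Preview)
Your proposal is correct and follows exactly the approach the paper indicates: the paper's proof is the single line ``By induction on the hierarchy $<$ of $L$,'' and you have spelled out precisely that induction, using conditions (2) and (3) to fix the context of each template in turn and condition (1) to obtain a unique exact well-founded model at each stage.
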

\begin{proof} By induction on the hierarchy $<$ of $L$.
\end{proof}

\subsection{The {\boldmath$\Sigma_{Temp}$} vocabulary restriction}

The condition that templates should be built from $\Sigma_{Temp}$ and not from user-defined symbols is to ensure that templates are domain independent `drop-in' building blocks. 
%
This  restriction might seem too stringent, but we can show that many template definitions for which it does not hold, can be rewritten as an equivalent one for which it holds.

\newcommand{\ooo}{{\overline{o}}}
Let $\D$ be definition of second order predicates with $\defp{\D}\cap\Sigma_{Temp}=\emptyset$, and $\ooo$ the tuple of all free (user-defined) symbols of $\pars{\D}\setminus \Sigma_{Temp}$  (arranged in some arbitrary order). For such definitions, we define a templified version. For any rule or formula $\Psi$, we define $\Psi^\ooo$ to be  $\Psi$ except that every atom $P(\ttt)$ in $\Psi$ with $P\in\defp{\D}$ is replaced by $P'(\ttt,\ooo)$, with $P'$ a new symbol extending $P$ with new arguments corresponding to $\ooo$.

We say that a structure $\struct$ corresponds to $\struct'$  if $\struct, \struct'$   interpret the free symbols of $\D$, respectively those of $\D_{Temp}$, they are identical on shared symbols and  for each $P\in\defp{\D}$, $P^\struct=\left\{ \ddd | (\ddd,\ooo^\struct) \in {P'}^{\struct'}\right\}$. Note that for each $\struct'$ and  each value $\ddd_o$ for $\ooo$ in the domain of $\struct'$,  there is a unique interpretation $\struct$  with $\ooo^\struct=\ddd_o$ that  corresponds to $\struct'$.

\begin{defn}
We define the \emph{templified definition }$\D_{Temp}$ of $\D$ as the definition  $\{ \forall \ooo (\Psi^\ooo) \mid \Psi\in \D\}$ and we define $\Sigma_{Temp}'=\Sigma_{Temp}\cup \{P' \mid P\in \defp{\D}\}$.
\end{defn}
We assume that $\ooo$ consists only of first order predicate symbols. Under this condition, the templified definition $\D_{Temp}$  is a template over $\Sigma_{Temp}'$. 
\begin{proposition}\label{prop:OpensArgumentsRewrite}
Let $\struct$ be a well-founded model of $\D$ and $\struct'$ a well-founded model of $\D_{Temp}$ such that $\struct$ and $\struct'$ are identical on $\Sigma_{Temp}$. Then it holds that
\[P^{\struct} = \left\{ \ddd | (\ddd,\ooo^\struct) \in {P'}^{\struct'}\right\}\]
\end{proposition}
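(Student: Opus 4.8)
The plan is to exploit the definitional correspondence between $\D$ and $\D_{\mathit Temp}$ directly at the level of the well-founded (equivalently, partial stable) construction. The key observation is that, because $\ooo$ consists only of first order predicate symbols whose values are fixed by the shared part of $\struct$ and $\struct'$, fixing a value $\ddd_o$ for $\ooo$ isolates a ``slice'' of $\D_{\mathit Temp}$ that is isomorphic as a rule set to $\D$ itself. Concretely, given $\struct'$ and the value $\ddd_o = \ooo^\struct$, consider the restriction of $\struct'$ to the domain atoms of the form $P'(\ddd,\ddd_o)$ for $P\in\defp{\D}$; call this (under the obvious bijection with domain atoms $P(\ddd)$) the structure $\struct'_{\ddd_o}$. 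I would first check the syntactic fact that for any rule $\forall\xxx(P(\xxx)\rul\varphi)\in\D$, the corresponding rule in $\D_{\mathit Temp}$ is $\forall\ooo\forall\xxx(P'(\xxx,\ooo)\rul\varphi^\ooo)$, and that for the particular instantiation $\ooo := \ddd_o$ the body $\varphi^\ooo[\ddd_o]$ evaluates in any expansion $\J'$ of $\struct'$ exactly as $\varphi$ evaluates in the corresponding structure $\J$ obtained by reading off the $\ddd_o$-slice. This is a routine induction on the structure of $\varphi$ using property (1) of truth assignments (free-symbol locality) and the definition of $\cdot^\ooo$: the only atoms that change are the $\defp{\D}$-atoms, and they change in a way that precisely tracks the slice.

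With that lemma in hand, the argument is: let $\struct'$ be the well-founded model of $\D_{\mathit Temp}$ agreeing with $\struct$ on $\Sigma_{\mathit Temp}$. Form the $\ddd_o$-slice structure $\struct'_{\ddd_o}$ as above; by construction it agrees with $\struct$ on $\pars{\D}$ (both restrict to the same interpretation of $\Sigma_{\mathit Temp}$ and of $\ooo$, and $\pars{\D}\subseteq\Sigma_{\mathit Temp}\cup\{\ooo\}$ since $\defp{\D}\cap\Sigma_{\mathit Temp}=\emptyset$ and $\ooo$ collects the remaining free symbols). I claim $\struct'_{\ddd_o}$ is a partial stable model of $\D$. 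Each of the three defining conditions of Definition~\ref{DefPartialStable} for $\D$ in $\struct'_{\ddd_o}$ transfers from the corresponding condition for $\D_{\mathit Temp}$ in $\struct'$ via the slice lemma: condition (1) because $\struct'$ satisfies the fixpoint equation for every atom $P'(\ddd,\ddd_o)$ and the body values match; condition (2) (prudence) because any witnessing $\Tr$-set/$\Un$-set for $\D$ in $\struct'_{\ddd_o}$ lifts to one for $\D_{\mathit Temp}$ in $\struct'$ by leaving all other slices untouched, and closure transfers since rules with $\ooo\neq\ddd_o$ are unaffected; condition (3) (braveness) symmetrically, since an unfounded set of $\D$ in the slice lifts to an unfounded set of $\D_{\mathit Temp}$. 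Since $\struct$ is exact and is the well-founded (hence $\leqp$-least partial stable) model of $\D$ over its parameter interpretation, and $\struct'_{\ddd_o}$ is a partial stable model of $\D$ with the same parameter interpretation, we get $\struct\leqp\struct'_{\ddd_o}$; but $\struct'_{\ddd_o}$ is exact ($\struct'$ is), so $\struct=\struct'_{\ddd_o}$, which is exactly the claimed identity $P^\struct = \{\ddd\mid(\ddd,\ooo^\struct)\in{P'}^{\struct'}\}$.

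I expect the main obstacle to be condition (2), prudence, in the transfer direction from $\D_{\mathit Temp}$ to the slice. The subtlety is that a $\Tr$-set or $\Un$-set witnessing non-prudence for $\D$ in the slice must be shown to still witness it for $\D_{\mathit Temp}$ in $\struct'$, and this requires checking that the ``closure under $\D_{\mathit Temp}$'' obligation on atoms $P'(\ddd,\ddd'_o)$ with $\ddd'_o\neq\ddd_o$ is vacuously preserved — which it is, because revising only atoms in the $\ddd_o$-slice cannot change the value of a body $\varphi^\ooo[\ddd'_o]$ for $\ddd'_o\neq\ddd_o$ (again by the slice lemma plus $\leqp$-monotonicity, or rather plain locality, of $\ass$). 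One must also be slightly careful that $\struct'$ being the \emph{global} well-founded model does not force some slice to be non-least in a way incompatible with $\struct$ being least; but since the slices are semantically independent (no rule body in $\D_{\mathit Temp}$ mentions $P'(\cdot,\ddd'_o)$ for two different values of $\ooo$ simultaneously), the global well-founded model decomposes slice-wise, and each slice is forced to be the well-founded model of $\D$ over the matching parameters. Making this decomposition precise — ideally by a short lemma that a rule set which partitions into independent blocks has a well-founded model equal to the union of the per-block well-founded models — is the cleanest way to close the argument, and I would state and use exactly that.
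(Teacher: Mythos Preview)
Your proposal is correct and follows essentially the same route as the paper: both establish the independency (``slice'') lemma by structural induction on formulas, then transfer the three partial-stable conditions of Definition~\ref{DefPartialStable} from $\D_{Temp}$ in $\struct'$ to $\D$ in the $\ooo^{\struct}$-slice via exactly the lifting of $\Tr$-sets, $\Un$-sets, and unfounded sets that you describe. The only minor difference is the closing step: the paper argues via a $\glb_{\leqp}$-decomposition of partial stable interpretations (implicitly using both directions of the correspondence it states as an ``iff''), whereas you use exactness of $\struct$ to turn $\struct\leqp\struct'_{\ddd_o}$ directly into equality---a small streamlining that needs only the one direction you actually prove.
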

Stated differently, $\struct$ corresponds to $\struct'$. The templified definition captures the original one, and hence, each  theory can be rewritten in terms of the new templified defined symbols. 

\begin{proof}\label{proof:OpensArgumentsRewrite}
Assume that $\struct$ corresponds to $\struct'$.  It is easy to prove, by  induction on the formula structure, that for any formula $\varphi$ in the vocabulary of $\D$, it holds that $\varphi^\struct=(\varphi^\ooo)^{\struct'[\ooo:\ooo^\struct]}$.  We call this the independency property since it shows  that $(\varphi^\ooo)^{\struct'[\ooo:\ooo^\struct]}$ is influenced by only a small part of the interpretation of $P'$, namely  the values of domain atoms $P'(\ddd,\ooo^{\struct})$.

The key property to prove is that $\struct'$  is a  partial stable interpretation  of $\D_{Temp}$ iff for each value  $\ddd_o$ for $\ooo$, the unique  $\struct$ that corresponds to $\struct'$  such that $\ooo^\struct=\ddd_o$  is a partial stable interpretation of $\D$. Intuitively, a partial stable interpretation of $\D_{Temp}$ is a kind of union of partial stable interpretations of $\D$, one for each assignment of values to $\ooo$. 

We prove this property only in one direction. The other direction is similar.  Assume that $\struct'$ is a partial stable interpretation of $\D_{Temp}$ satisfying the three conditions of Definition~\ref{DefPartialStable}. We need to show for every $\struct$ that corresponds to $\struct'$, that $\struct$ is a partial stable interpretation of $\D$.  Condition 1), that $P(\ddd)^\struct=Max_\leq\{\varphi[\ddd]^\struct \mid \forall \xxx(P(\xxx)\rul\varphi[\xxx])\in \D\}$ follows from the fact that $P'(\ddd,\ooo^\struct)^{\struct'}$ satisfies the corresponding equation for $\D_{Temp}$,  that $P(\ddd)^\struct=P'(\ddd,\ooo^\struct)^{\struct'}$, and that for each rule body $\varphi$ for $P$, $\varphi[\ddd]^\struct=(\varphi^\ooo)^{\struct'[\ooo:\ooo^\struct]}$ (by the independency property). The condition 2) follows from the fact that when $T'=\{P'(\ddd,\ooo^\struct)\mid P(\ddd)\in T\}$, and $U'=\{P'(\ddd,\ooo^\struct)\mid P(\ddd)\in U\}$, then  $T$ is a  $\Tr$-set and $U$  a $\Un$-set of $\struct$ such that $\struct[T:\Un][U:\Tr]$ is closed under $\D$ iff  $T'$ is a  $\Tr$-set  and $U'$  a $\Un$-set of $\struct'$ such that $\struct'[T':\Un][U':\Tr]$ is closed under  $\D_{Temp}$.  This follows from the independency property. Condition 3) is proven similarly. 


It is easy to see that this property entails the proposition, since intuitively, it entails that a  well-founded model $\struct'$ of $\D_{Temp}$, which is the $glb_{\leqp}$ of all partial stable interpretations of $\D_{Temp}$ with the same context as $\struct'$, contains for each value $\ddd_o$ for $\ooo$ the $glb_{\leqp}$ of the partial stable interpretations $\struct$ of $\D$ in the context  with  $\ooo^\struct=\ddd_o$.
\end{proof}

\subsection{Simple Templates}
\ignore{In the Section \ref{sub:soidDef}, we built a logic \soids. }
Extending a logic with arbitrary (recursive) templates may easily
increase the descriptive  complexity of the logic.  Below, we develop
a simple but useful template formalism for \foid that does not have
this effect.  In addition, we show that libraries of simple templates
can be compiled away using them as macros.

In Figure~\ref{fig:BNFforL} we define sublanguages \foids, \esoids and
\asoids of \soids (by mutual recursion) consisting of atoms,
negations, conjunctions, quantification, definitions and the
let-construct. This last construct represents a second order
quantification, where the quantified symbol(s) $S$ are defined in an
accompanying paradox-free definition $\D$. Definitions of second order
symbols in \esoids and \asoids contain only (possibly nested) first
order definitions. Since model checking of (nested) first order
definitions is polynomial, the descriptive complexity of \foids is P,
of \esoids is NP and of \asoids is co-NP.

\begin{figure}[h]
\centering
\begin{subfigure}[t]{0.35\textwidth}
\begin{minipage}[t]{\linewidth}
\begin{align*}
\foids\ \varphi & ::= \\
& | s(\ttt) (\in Atom^1)\\
& | \lnot \varphi \\ 
& | \varphi \land \varphi \\
& | \exists_{\mathit{FO}}\ s : \varphi\\
& | \text{let }\{s(\ttt)\rul\varphi\}\text{ in }\varphi\\
& | \{s(\ttt)\rul\varphi\}\\
\end{align*}
\end{minipage}
\caption{\foids}\label{fig:FO}
\end{subfigure}
\begin{subfigure}[t]{0.3\textwidth}
\begin{minipage}[t]{\linewidth}
\begin{align*}
\esoids\ \epsilon & ::=\\
& | S(\ttt) (\in Atom^2)\\
& | \lnot \alpha \\ 
& | \epsilon\land\epsilon \\
& | \exists_{\mathit{FO}}\ s: \epsilon\\
& | \text{let }\{s(\ttt)\rul\varphi\}\text{ in }\epsilon\\
& | \{s(\ttt)\rul\varphi\}\\
& | \exists_{\mathit{SO}}\ s: \epsilon
\end{align*}
\end{minipage}
\caption{\soids}\label{fig:ESO}
\end{subfigure}
\begin{subfigure}[t]{0.3\textwidth}
\begin{minipage}[t]{\linewidth}
\begin{align*}
\asoids\ \alpha & ::= \\
& | S(\ttt) (\in Atom^2)\\
& | \lnot \epsilon \\ 
& | \alpha \land \alpha \\
& | \exists_{FO}\ s : \alpha\\
& | \text{let }\{s(\ttt)\rul\varphi\}\text{ in }\alpha\\
& | \{s(\ttt)\rul\varphi\}\\
& | \forall_{\mathit{SO}}\ s: \alpha
\end{align*}
\end{minipage}
\caption{\asoids}\label{fig:ASO}
\end{subfigure}
\caption{The \foids, \esoids and \asoids subformalisms of \soids\label{fig:BNFforL}}
\end{figure}

\begin{defn}
 A \textit{simple template} is a template of the form $\{\forall \xxx(P(\xxx)\rul\varphi_P[\xxx])\}$ with $P(\xxx)\in Atom^2$ and $\varphi_P\in \foids$.  
\end{defn} 
A simple template defines one symbol and contains one rule with an \foids body. Let $L$ be a template library over $\Sigma_{Temp}$ consisting of  non-recursive simple templates. Such a library is equivalent to the conjunction the completion of its definitions  $\forall \xxx(P(\xxx)\Leftrightarrow\varphi_P[\xxx])$. We want to show that while using such libraries  increases convenience, reuse, modularity, it does not increase complexity nor expressivity. Also,  such libraries can be used in the common way, as macros. 

\begin{thm} \label{thm:rewrite}For $\Sigma\cap\Sigma_{Temp}=\emptyset$, let $\varphi$ be a \esoids formula over $\Sigma\cup\Sigma_{Temp}$ that does not contain definitions of template symbols. There exists a polynomially larger \esoids formula $\varphi_1$ over $\Sigma$ that is $\Sigma$-equivalent to $\{\varphi\}\cup L$. There exists a polynomially larger \foids formula $\varphi_2$ over an extension $\Sigma_1$ of $\Sigma$ that is $\Sigma$-equivalent to $\{\varphi\}\cup L$.
\end{thm}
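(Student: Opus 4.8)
The plan is to prove the two claims in sequence, each by an elimination (``inlining'') procedure on template atoms, driven by the hierarchical stratification of the library $L$. Since $L$ consists of non-recursive simple templates, by the remark following the definition of simple template the library is equivalent to the conjunction of the completions $\forall\xxx(P(\xxx)\Leftrightarrow\varphi_P[\xxx])$, one per template symbol $P$, and by hierarchical stratification we may order these as $P_1<\dots<P_k$ so that $\varphi_{P_i}$ mentions only interpreted symbols and template symbols $P_j$ with $j<i$. For the first claim I would process the template symbols in increasing order of $<$: starting from $\varphi$, repeatedly pick an occurrence of the $<$-maximal template symbol still present, say $P$ applied to an argument tuple $\ttt$, and replace $P(\ttt)$ by $\varphi_P[\ttt]$ (the body of $P$'s defining rule with the formal parameters substituted by $\ttt$). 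Because the body $\varphi_P$ is an \foids formula and \foids is a sublanguage of \esoids closed under the substitution being performed, each replacement keeps the formula in \esoids; because $\varphi$ contains no definitions of template symbols, the only occurrences of template symbols are as atoms $P(\ttt)$, so every occurrence can indeed be inlined; and because the stratification is strict and finite, the process terminates with an \esoids formula $\varphi_1$ over $\Sigma$ alone. The $\Sigma$-equivalence of $\varphi_1$ with $\{\varphi\}\cup L$ follows by a routine induction on the number of inlining steps, each step justified by Proposition on substitution of equivalents (or more directly, by the fact that under any model of $L$ the equivalence $P(\ttt)\Leftrightarrow\varphi_P[\ttt]$ holds, so $\ass$ assigns the same value before and after the replacement, while conversely any $\Sigma$-model of $\varphi_1$ extends uniquely to a model of $L$ by the Proposition that every $\Sigma$-interpretation has a unique $\Sigma_{Temp}$-expansion satisfying $L$).

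For the second claim the obstacle is that naive inlining blows up the formula exponentially: a template symbol occurring $m$ times, whose body again contains $m$ template occurrences, and so on through $k$ strata, produces $m^k$ copies. The standard fix — and the reason the theorem only asks for an \foids formula over an \emph{extension} $\Sigma_1$ of $\Sigma$ — is to introduce, for each template symbol $P_i$ with formal first-order arguments, a fresh symbol $\widehat P_i$ of the same type in $\Sigma_1$, and to add its definition via the \texttt{let}-construct. Concretely I would build $\varphi_2$ as a nested cascade of \texttt{let}-blocks
\[
\text{let }\{\widehat P_1(\xxx)\rul\widehat\varphi_{P_1}[\xxx]\}\text{ in }\big(\text{let }\{\widehat P_2(\xxx)\rul\widehat\varphi_{P_2}[\xxx]\}\text{ in }\big(\cdots\text{ in }\widehat\varphi\big)\big),
\]
where $\widehat\varphi$ and each $\widehat\varphi_{P_i}$ are obtained from $\varphi$ and $\varphi_{P_i}$ by replacing every template atom $P_j(\ttt)$ by $\widehat P_j(\ttt)$; this replaces each $\Leftrightarrow$-completion by an \foids definition (a single-rule non-recursive definition, hence paradox-free, so \texttt{let} is applicable) and keeps the total size linear in the sum of the sizes of $\varphi$ and the templates in $L$, hence polynomial. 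One checks that $\varphi_2\in\foids$ (each body $\varphi_{P_i}\in\foids$, \foids is closed under the renaming, and \foids admits both the definition and \texttt{let} constructs) and that it is $\Sigma$-equivalent to $\{\varphi\}\cup L$: a $\Sigma$-interpretation $\I$ satisfies $\varphi_2$ iff the (unique) expansion interpreting each $\widehat P_i$ as the completion of its defining body — which by construction mirrors $P_i^{\I'}$ for the unique $L$-expansion $\I'$ of $\I$ — makes $\widehat\varphi$ true, iff $\I'\models\{\varphi\}\cup L$, using again the uniqueness Proposition for template expansions and the semantics of \texttt{let} and of non-recursive \foids definitions (whose well-founded model is exact and equals the completion).

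The main thing to get right, and where I expect to spend the real care, is the size bookkeeping in the second part together with the verification that the \texttt{let}-nesting respects the stratification — i.e. that $\widehat\varphi_{P_i}$ only mentions $\widehat P_j$ for $j<i$, so each inner \texttt{let} is evaluated in a context where the outer $\widehat P_j$'s are already fixed, and that each $\{\widehat P_i(\xxx)\rul\widehat\varphi_{P_i}[\xxx]\}$ is genuinely paradox-free (immediate here since it is non-recursive). The first part is essentially bureaucratic once one fixes the inlining order; the only subtlety there is the polynomial-size claim, which holds precisely because the templates are \emph{non-recursive} and \emph{simple} (one rule, \foids body), so no exponential duplication occurs across strata — each template body is substituted at most once per occurrence and occurrences do not regenerate. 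I would state the induction invariant for both constructions explicitly (``after $t$ steps / at \texttt{let}-depth $t$, the formula is in the right fragment, has size bounded by $|\varphi|+\sum_{\D\in L}|\D|$, and is $\Sigma$-equivalent to $\{\varphi\}\cup L$'') and let the inductive step discharge each clause using the propositions cited above.
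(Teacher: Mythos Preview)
Your treatment of $\varphi_1$ follows the paper's macro-expansion argument, but your size analysis is wrong. Inlining the $<$-maximal template symbol \emph{does} regenerate occurrences: if the body $\varphi_{P_k}$ contains atoms $P_{k-1}(\cdot)$, then each substitution of $P_k$ creates fresh occurrences of $P_{k-1}$, and iterating over $k$ strata gives a blowup exponential in $\#(L)$. The paper states this explicitly --- $\varphi_1$ is polynomial in $|\varphi|$ but exponential in $\#(L)$ --- and reads the theorem's ``polynomially larger'' with $L$ held fixed. Your proposed invariant bounding the size by $|\varphi|+\sum_{\D\in L}|\D|$ is simply false for the inlining construction.

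For $\varphi_2$ there is a genuine gap, and the paper takes a completely different route. Your \texttt{let}-cascade does not produce an \foids formula, for two independent reasons. First, by the BNF in Figure~\ref{fig:BNFforL}, \foids admits only first-order atoms $s(\ttt)\in Atom^1$ and only first-order heads in definitions and \texttt{let}-blocks; template symbols are second-order predicates, so your $\widehat P_i$ ``of the same type'' cannot be the head of a rule inside \foids, and the atoms $\widehat P_j(\ttt)$ you leave in $\widehat\varphi$ are $Atom^2$. Second, $\varphi$ is an \esoids formula and may already contain $\exists_{SO}$ quantifiers; your construction never eliminates these, so $\widehat\varphi$ is not in \foids regardless of the template issue. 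The paper instead starts from $\varphi_1$ (already template-free \esoids over $\Sigma$), pulls every $\exists_{SO}$ to the front by iterating the swap
\[
\forall_{FO}\,x\ \exists_{SO}\,P\ \psi \ \Leftrightarrow\ \exists_{SO}\,P'\ \forall_{FO}\,x\ \psi[P(\ttt)\backslash P'(\ttt,x)],
\]
and then Skolemises the resulting prenex $\exists_{SO}$ block by turning the bound predicate variables into fresh free first-order predicate symbols of $\Sigma_1$. That is what removes all second-order syntax and lands the formula in \foids; only polynomially many argument-extension steps are needed, and each preserves $\Sigma$-equivalence.
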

\begin{proof}
The formula $\varphi_1$ is obtained by treating $L$ as a set of macros. We iteratively substitute template atoms $P(\ttt)$ in $\varphi$ by $\varphi_P[\ttt]$. This process is equivalence preserving. It  terminates due to the stratification condition on $L$, and the limit is a polynomially larger \esoids formula $\varphi_1$ in the size of $\varphi$ (exponential in  $\#(L)$) that is $\Sigma$-equivalent to  $\{\varphi\}\cup L$. 

To obtain $\varphi_2$, we apply the well-known transformation of moving existential quantifiers to the front and skolemising them. Second order quantifiers can be switched with  first order ones using:
\begin{gather*}
\forall_{FO}\, x:\exists_{SO} P : \varphi \Leftrightarrow \exists_{SO} P' : \forall_{FO}\, x : \varphi[P(\ttt) \text{\textbackslash} P'(\ttt,x)  ]
\end{gather*}
This process preserves $\Sigma$-equivalence. As only a polynomial number of steps are needed to transform the formula into this desired state, the size of the resulting formula is  polynomially larger.
\end{proof}

Previous results in \cite{nmr/IanniIPSC04} indicated that the introduction of simple, stratified templates does not introduce a significant performance hit. The above theorem recovers these efficiency results. 

\section{Conclusion}
In this paper we developed a new way to define language constructs for a logic.
New language constructs must combine a syntactical rule with a three-valued semantic evaluation. This three-valued semantic evaluation is subject to certain restrictions.
Language constructs can then be arbitrarily combined to compose a logic.
In particular, we construct $\soids$: a second order language with inductive definitions. 

Using this language, it is easy to define templates as second order definitions.
We conclude our paper with a rewriting scheme to show that, given some restrictions, templates do not increase the descriptive complexity of the host language. 

In the future, we want to generalize our way of defining language constructs to allow functions and provide a more comprehensive type system.
On the more practical side, we intend to bring our ideas into practice by extending the IDP\cite{url:idp} system with simple templates.

\clearpage
 \bibliographystyle{acmtrans}
\bibliography{idp-latex/krrlib}
\end{document}